\theoremstyle{plain}
\newtheorem{lem}{\protect\lemmaname}
\theoremstyle{remark}
\newtheorem{rem}{\protect\remarkname}
\theoremstyle{plain}
\newtheorem{cor}{\protect\corollaryname}
\theoremstyle{plain}
\newtheorem{thm}{\protect\theoremname}
\theoremstyle{remark}
\newtheorem{claim}{\protect\claimname}
\definecolor{navy}{HTML}{0080AC}
\DeclareMathAlphabet{\mathcal}{OMS}{cmsy}{m}{n}
\DeclareSymbolFont{newfont}{OML}{cmm}{m}{it}% Computer Modern math font
\DeclareMathSymbol{\epsilon}{3}{newfont}{15}% Symbol 15
\DeclareMathSymbol{\varrho}{3}{newfont}{37}% Symbol 37
\renewcommand{\thebibliography}[1]{%
  \section*{\refname}%
    \@mkboth{\MakeUppercase\refname}{\MakeUppercase\refname}%
    \list{\@biblabel{\@arabic\c@enumiv}}%
         {\settowidth\labelwidth{\@biblabel{#1}}%
          \leftmargin\labelwidth
          \advance\leftmargin\labelsep
          \@openbib@code
          \usecounter{enumiv}%
          \let\p@enumiv\@empty
          \renewcommand\theenumiv{\@arabic\c@enumiv}}%
    \sloppy
    \clubpenalty4000
    \@clubpenalty \clubpenalty
    \widowpenalty4000%
    \sfcode`\.\@m
    \linespread{1}\selectfont % <-- This sets the spacing
}
\providecommand{\claimname}{Claim}
\providecommand{\corollaryname}{Corollary}
\providecommand{\lemmaname}{Lemma}
\providecommand{\remarkname}{Remark}
\providecommand{\theoremname}{Theorem}
\begin{document}
\title{Peace Through Side Payments\thanks{We thank Subhasish M. Chowdhury, Shiran Rachmilevitch and conference
participants at 2021 Contests: Theory and Evidence Conference at Max
Planck Institute, 2021 Annual SAET Conference at Seoul National University
for helpful comments and suggestions.}}
\author{Jingfeng Lu\thanks{Department of Economics, National University of Singapore. Email:
\protect\protect\protect\href{mailto:ecsljf@nus.edu.sg}{ecsljf@nus.edu.sg}.} $\quad$ Zongwei Lu\thanks{School of Economics, Shandong University, \protect\protect\protect\href{mailto:zongwei.lu@email.sdu.edu.cn}{zongwei.lu@email.sdu.edu.cn}.}
$\quad$ Christian Riis\thanks{Department of Economics, BI Norwegian Business School, \protect\protect\protect\href{mailto:christian.riis@bi.no}{christian.riis@bi.no}.}}
\maketitle
\begin{abstract}
We study strategic bargaining for peaceful settlement before a conflict
escalates into war, comparing two protocols: offering a take-it-or-leave-it
side payment versus requesting one. Unlike in mediation models, the
proposer here can signal private information and influence the opponent's
beliefs by varying the payment proposal, which renders the prospect
of peace tenuous. In the bribing model, peace can be implemented through
a continuum of bribes but cannot be secured. Conversely, in the requesting
model, peace security is possible, yet it can only be sustained through
a single, specific request.

\bigskip{}
 \bigskip{}

\noindent Keywords: All-pay auctions; side payments; signaling; conflict
resolution; peaceful settlement

\bigskip{}

\noindent JEL codes: D44; D74; D82 
\end{abstract}
\clearpage{}

\section{Introduction}

Before a conflict escalates, disputants may attempt to settle via
a side payment. Such payments are common in legal disputes, political
conflicts, and international bargaining, where one side compensates
the other for staying out of a costly contest. Under what conditions
does such a side payment lead to or even guarantee peace? To answer
this question, we study two bargaining protocols with opposite payment
directions. In the first (bribing) protocol, before a single-object
all-pay auction starts, player 1 can offer a take-it-or-leave-it side
payment (also called a bribe) to player 2. If player 2 accepts, she
exits the auction and player 1 obtains the object without costly conflict.
If player 2 rejects, both players enter the all-pay auction, escalating
the conflict. In the second (requesting) protocol, instead of offering
a side payment, player 1 requests one from player 2 for his own exit
from the conflict. 

Our models admit several degrees of peaceful settlement. We adopt
and adapt the notions of peace prospects introduced by \citet{Zheng2019}.
If peace occurs with certainty in a perfect Bayesian equilibrium (hereafter,
equilibrium), we say that the equilibrium is \emph{peaceful}. If there
exists such a peaceful equilibrium (supported by a belief system),
then we say that peace is \emph{implementable}. Furthermore, if an
equilibrium survives the D1 criterion of \citet{cho1990}, we say
that it is \emph{robust}.

Motivated by the idea of self-confirming equilibria (e.g., \citet{Fudenberg_Levine1993}),
\citet{Zheng2019} develops a notion of peace security to address
potential opportunism by requiring that peace remains optimal under
any conceivable continuation play. While we build upon the framework
of \citet{Zheng2019}, the strategic structure of our model necessitates
a refinement of the notion of peace security. In \citet{Zheng2019},
a deviation by either player leads immediately to a continuation auction;
thus, security primarily concerns whether a player can exploit the
opponent's off-path beliefs within that auction. In our setting, however,
the proposer can deviate to various off-path bribes or requests. If
security were defined relative to arbitrary replies, it would be trivially
impossible for any positive bribe: for instance, the proposer could
deviate to any lower off-path bribe, and a reply of full acceptance
(though perhaps irrational) would render the deviation profitable.\footnote{In the requesting model, peace security would be trivially impossible:
the proposer could deviate to a marginally higher request, and a
reply of full acceptance (though perhaps irrational) would render
the deviation profitable.} To avoid this, we define security relative to \emph{consistent} continuation
replies. A continuation reply is consistent if the receiver's decision
at the earlier stage (acceptance or rejection) is consistent with
the expected payoff from the subsequent auction equilibrium. This
adaptation ensures that we evaluate the on-path peaceful settlement
only against deviation payoffs that are actually conceivable given
the receiver's rationality. Crucially, this definition still captures
the core challenge of security: once a consistent reply is fixed,
a deviating proposer may still ``secretly'' bid in the continuation
auction, using his private knowledge of his own type to exploit the
receiver's beliefs.

We show that peace security is impossible in our bribing model, in
sharp contrast to \citet{Zheng2019}. The key difference lies in the
proposer's incentives. In \citet{Zheng2019}, a neutral mediator can
allocate surplus evenly to sustain security. In our model, player
1 is a self-interested proposer who retains all surplus. The ability
to propose different bribes creates a persistent incentive for player
1 to undercut the on-path bribe and capture additional rent. Because
player 1's highest type always earns positive information rent in
the continuation auction, this undercutting deviation is always profitable,
rendering peace security impossible.

Peace implementability, however, is possible in the bribing model.
We characterize the necessary and sufficient conditions for implementability.
Under these conditions, there is generally a continuum of equilibria;
all are robust, but they yield different payoffs to the players. 
Peace implementability depends critically on the incentives of player
1's lowest type; specifically, the off-path ``gambling'' for this
type due to the ability to propose various bribes can be interpreted
as participation in a particular first-price auction (FPA).

We then analyze the requesting model, where player 1 requests a side
payment for his own exit. We characterize the necessary and sufficient
conditions for robust peaceful equilibria. In contrast to the bribing
model, all robust peaceful equilibria feature a unique on-path request:
player 2's lowest valuation (the highest possible peaceful request).
The intuition follows from the D1 criterion: any off-path request
is attributed to player 1's highest type. So any request lower than
player 2's lowest valuation is accepted by all types of player 2.
But player 1 would not allow such a surplus margin to remain for player
2. Hence, the only possible equilibrium request is the lowest valuation
of player 2.

In contrast to the bribing model, peace security is possible in the
requesting model. Here, all types of player 1 receive the maximum
peaceful rent, eliminating the undercutting incentive found in the
bribing case. Security depends on two constraints. First, player 2's
lowest type must not profit from rejection, even if player 1 holds
the most favorable belief that player 2 is strong. Second, player
1's strongest type must not profit from raising the request to trigger
full rejection and exploit the continuation auction. Under player
2's most optimistic belief that player 1 is weak, the auction features
low bids, allowing the strong type to win cheaply. Security requires
this potential gain (the gap between player 1's highest and lowest
valuations) to be bounded by player 2's lowest valuation (the on-path
payoff). If player 1's lowest type exceeds player 2's lowest type,
the second constraint is automatically satisfied.

To summarize, in the bribing model, peace can be implemented through
a continuum of bribes but cannot be secured. Conversely, in the requesting
model, peace security is possible, but it can be sustained only through
a single, specific request. Collectively, these results highlight
the critical role of the neutral mediator in \citet{Zheng2019}; while
a mediator can secure peace through a continuum of proposals (under
suitable conditions), the introduction of a proposer who can signal
private information renders the prospect of peace far more tenuous.

In addition to the main results, for the bribing model we also examine
the comparative statics of peace implementability. In \citet{Zheng2019},
strengthening a player in the sense of first-order stochastic dominance
(FOSD) preserves peace implementability. This conclusion continues
to hold for player 1 in our model. However, for player 2, FOSD is
insufficient because player 1's off-path payoffs depend on truncated
conditional distributions. We show that peace implementability is
preserved if FOSD is strengthened to hazard-rate dominance, which
maintains the dominance order under truncation.

Furthermore, we also examine non-peaceful equilibria of the bribing
model. We rule out regular separating equilibria, contrasting with
\citet{ES2004} but echoing \citet{rachmilevitch2013}. The intuition
is that a higher type always prefers to mimic a marginally lower type
by offering a lower bribe. This deviation yields two benefits: it
saves money if accepted, and lowers the continuation auction's competitive
intensity if rejected (allowing the higher type to win with certainty).
These gains outweigh the cost of competing against slightly more types,
precluding any continuous, strictly increasing bribing function. We
also characterize equilibria with finite pooling regions. Such equilibria
feature either one or two distinct bribes. If there are two, the lower
bribe is rejected with positive probability, while the higher bribe
is accepted with certainty.

The rest of the article is organized as follows. A brief review of
the related literature appears in the remainder of this section. In
Section~\ref{sec:The-model}, we describe the model and introduce
some useful results for all-pay auctions from \citet{Zheng2019}.
In Section \ref{sec:The-bribing-game}, we first show that peace security
is impossible and then characterize the necessary and sufficient conditions
for peace implementability. We then consider non-peaceful equilibria.
In Section~\ref{sec:requesting}, we characterize the necessary and
sufficient conditions for robust peaceful equilibria and for peace
security. Section~\ref{sec:Conclusion} concludes. The additional
results are gathered in the supplemental appendix. 

\subsection{Related literature}

Our work is most closely related to \citet{ES2004} and \citet{Zheng2019}.

\citet{ES2004} pioneer the literature on dynamic models for bidder
collusion. They consider second-price auctions and study the notion
of bribe-proofness, namely the possibility that a peaceful settlement
cannot be achieved (in some equilibria). They show that under certain
regularity conditions, robust increasing equilibria exist. The difference
is that we consider all-pay auctions and we are interested in the
possibility of peaceful settlement. The nature of all-pay auctions
complicates the analysis because players have no (weakly) dominant
strategies in the continuation auctions if player 2 rejects a bribe.
On the other hand, thanks to \citet{Zheng2019}, our analysis is made
possible by his results on two-player all-pay auctions (with arbitrary,
independent type distributions). Since \citet{ES2004}, subsequent
work has proposed alternative models for second-price and first-price
auctions. \citet{rachmilevitch2013} considers the same bargaining
protocol in first-price auctions, while \citet{Rach2015} allows for
alternating offers between two players in second-price auctions. \citet{Troyan2017}
extends the model of \citet{ES2004} to a setting with interdependent
valuations and affiliated signals. Our earlier article, \citet{LU20211},
also examines second-price auctions but allows for a combination of
a bribe and a request. The key difference between the current paper
and this strand of literature is that we consider all-pay auctions
and focus on peaceful settlement.

\citet{Zheng2019} considers a conflict-mediation model in which,
before an all-pay auction starts, a mediator can recommend a split
of the prize. If both players accept the split, the game ends and
peace is achieved; otherwise, both players enter the auction and compete
non-cooperatively. The key difference is that in our model one player
is endowed with bargaining power and bargains directly with the opponent.
One consequence is that, in our model, the briber can propose different
off-path bribes, which endogenize player 2's beliefs and replies.
In particular, in the continuation games of our model, both players'
type distributions may be updated, whereas in Zheng's model only the
rejector's type distribution is updated. 

More generally, our work relates to the conflict mediation literature.
One strand assumes an exogenous outside option of peace, with a mediator
choosing a mechanism to preempt conflict (e.g., \citet{Bester2006, compte2009, fey2011, horner2015, spier1994pretrial}).
A second strand, to which our paper contributes, endogenizes the outside
option as the continuation play if mediation fails. Within this strand,
\citet{Zheng2019b} considers the same mediation model for first-price
auctions. \citet{Balzer2021a} consider a particular mechanism---Alternative
Dispute Resolution (ADR). Assuming discrete type distributions, they
characterize the optimal ADR mechanism that maximizes the settlement
rate. \citet{zhengali2021} revisit the all-pay auction model and
explore the case where peace cannot be guaranteed. In their paper,
the mediator's objective is to maximize the sum of the ex ante payoffs
of the players.\footnote{A separate strand studies conflict avoidance under complete information,
often modeling wars as Tullock contests (e.g., \citet{bevia2010, Herbst2017, kimbrough2013}).}

Finally, our work contributes to the literature on auction collusion.
In the classic framework (e.g., \citet{Graham1987}), collusion is
often modeled as a static game where a pre-existing cartel coordinates
bidding to extract surplus from the auctioneer. One strand of subsequent
research has extended this framework to dynamic settings but with
repeated auctions, where coordination is achieved through explicit
side payments or communication (e.g., \citet{Aoyagi2007}), intertemporal
payoff transfers via bid rotation (e.g., \citet{Aoyagi2003}), or
tacit endogenous bid rotation without communication or transfers (e.g.,
\citet{Rach2013b}). While these studies focus on the efficiency and
feasibility of coordinating actions in repeated settings, our paper
examines a different strategic question: the prospects for peaceful
settlement. In particular, in our setting, bidders bargain directly
in a one-shot interaction, and the proposed side payment can serve
as a signaling device.\footnote{\citet{horner2007} show that bidding  can be used as a signal to
manipulate the intensity of competition through bluffing or sandbagging,
thereby making non-monotonic signaling possible when jump bidding
is allowed. }

\section{The models \protect}\label{sec:The-model}

\global\long\def\supp{\mathrm{{supp}}}%
In this section, we describe the formal setup of the bribing model.
Since the requesting model shares the same underlying environment
and information structure, differing only in the direction of the
payment, the description is omitted here for brevity.

Two risk-neutral players, player 1 (he) and player 2 (she), are about
to participate in an all-pay auction. Before the auction begins, player
1 can offer a take-it-or-leave-it bribe $b$ to player 2. If player
2 accepts the bribe, she exits the auction and player 1 wins the object
at zero cost; otherwise, both players enter the auction and compete
non-cooperatively. The standard all-pay auction format is assumed;
in particular, ties are broken by a fair coin flip.

For each $i\in\{1,2\}$, player $i$'s valuation $v_{i}$ (also referred
to as the player's type) for the object is independently distributed
according to an atomless continuous distribution $F_{i}$ with density
$f_{i}$ on the support $[\underline{v}_{i},\bar{v}_{i}]$, where
$\bar{v}_{i}>\underline{v}_{i}\ge0$. 

We focus on perfect Bayesian equilibria. In the bargaining stage,
player 1 uses a pure strategy that maps each type to a bribe, and
player 2 uses a pure acceptance-rejection strategy that prescribes
either acceptance or rejection for each type and each observed bribe.
In the continuation auctions, each type of each player is allowed
to use a mixed bidding strategy because of the nature of all-pay auctions;
we work with distributional strategies in the same sense as \citet{Zheng2019}.

Upon receiving an off-path bribe $b$, player 2 forms a belief $\tilde{F}_{1}$
about player 1's type and considers an acceptance-rejection function
$\varrho$ that specifies which types accept and which reject. The
induced type distribution of player 2 in the continuation auction
is denoted by $\tilde{F}_{2}$. We assume that $\tilde{F}_{1}$ and
$\tilde{F}_{2}$ are independent. Player $i$'s distributional strategy
in the continuation auction is a probability measure on the product
space of types and bids. We use $\sigma$ to denote the equilibrium
(strategy pair) of a continuation auction $\mathcal{G}(\tilde{F}_{1},\tilde{F}_{2})$.
To distinguish $\sigma$ from the equilibrium of the grand game, we
refer to $\sigma$ as a BNE (Bayesian Nash equilibrium). For convenience
of exposition, we use $\pi_{i}$ to denote the payoffs from the grand
game and $U_{i}$ to denote the payoffs from the continuation auctions.
Finally, $\delta_{x}$ denotes the Dirac probability measure supported
on $\{x\}$. 

Player $i$'s strategy in a BNE $\sigma$ yields a bid distribution
$H_{i,\sigma}(\cdot)$ and $c_{i,\sigma}:=H_{i,\sigma}(0)$. Thus,
against player $i$, a bid $\beta$ wins with probability $H_{i,\sigma}(\beta)$.
Because of the payment rule of an all-pay auction and the equilibrium
conditions, for any $\sigma$, the supports of two bid distributions
coincide; we denote the highest possible bid by $x_{\sigma}$.

A strategy-belief assessment in the grand game specifies player 1's
bribe strategy, player 2's acceptance-rejection strategy after every
bribe, player 2's belief $\tilde{F}_{1}$ after every off-path bribe,
and a BNE $\sigma$ of the continuation auction induced after every
rejection history. On the equilibrium path, beliefs follow Bayes's
rule. Off the equilibrium path, we impose the following consistency
requirement (to avoid self-contradiction in players' replies): $\varrho$
and $\mathcal{G}(\tilde{F}_{1},\tilde{F}_{2})$ must be consistent
with each other:\footnote{Although we describe the requirements for the bribing model, the corresponding
requirements apply to the requesting model in Section \ref{sec:requesting}.} 
\begin{enumerate}[label=(\roman{enumi}).]
\item The acceptance-rejection behavior of player 2's types in $\varrho$
is consistent with the equilibrium outcome in $\mathcal{G}(\tilde{F}_{1},\tilde{F}_{2})$
(and belief $\tilde{F}_{1}$); if a type $v_{2}$ accepts an off-path
bribe $b$, she should not find it profitable to submit any bid in
the continuation auction; if a type $v_{2}$ rejects an off-path bribe
$b$, then her payoff from the continuation auction should not be
lower than $b$. In particular, when the reply is full acceptance,
no type of player 2 should find it profitable to reject the bribe
and individually compete with player 1 under belief $\tilde{F}_{1}$.
\item $\tilde{F}_{2}$ is consistent with $\varrho$ and the equilibrium
outcome of $\mathcal{G}(\tilde{F}_{1},\tilde{F}_{2})$, and is common
to both players.\footnote{Given the rejection set implied by $\varrho$, Bayes's rule is applicable.} 
\end{enumerate}
Off-path consistency does not rule out the following possibility.
For an off-path bribe $b$ and a belief $\tilde{F}_{1}$, there may
exist multiple pairs of $\varrho$ and $\sigma$ that satisfy the
requirements; we call these consistent replies. 

Off-path consistency implies that if, for an off-path bribe, the lowest
rejecting type is in the interior of the type support, then this type
must be indifferent between accepting the bribe and rejecting it.
Hence, the indifference condition uniquely determines the candidate
threshold.

Finally, for the continuation auction $\mathcal{G}(F_{1},\tilde{F}_{2})$
following the rejection of an on-path pooling bribe, we use $\sigma_{2}$
to denote an associated BNE.

\subsection{Preliminaries}

In this section, we introduce some properties of equilibria in two-player
all-pay auctions (with independent type distributions) from \citet{Zheng2019}.
Throughout the paper, we follow Zheng's definition of a generalized
inverse for the distribution function $\tilde{F}_{i}$: 
\[
\tilde{F}_{i}^{-1}(x):=\inf\{t\in\supp\;\tilde{F}_{i}:\tilde{F}_{i}(t)\ge x\}.
\]

\begin{lem}
\label{general-all-pay} For any all-pay auction $\mathcal{G}(\tilde{F}_{1},\tilde{F}_{2})$
and any BNE $\sigma$ of $\mathcal{G}(\tilde{F}_{1},\tilde{F}_{2})$,
there exists a unique triple $(x_{\sigma},c_{1,\sigma},c_{2,\sigma})\in\mathbb{R}_{++}\times[0,1]^{2}$
such that for each $i\in\{1,2\}$, $H_{i,\sigma}(x_{\sigma})=1$ and
for all $\beta\in[0,x_{\sigma}]$, 
\[
H_{i,\sigma}(\beta)=c_{i,\sigma}+\int_{0}^{\beta}\frac{1}{\tilde{F}_{-i}^{-1}(H_{-i,\sigma}(y))}dy,
\]
where $c_{i,\sigma}:=H_{i,\sigma}(0)$. 

Furthermore, 
\begin{enumerate}[label=(\roman{enumi}).]
\item $c_{1,\sigma}c_{2,\sigma}=0$; 
\item given the equilibrium behavior of types $v_{-i}\in\supp\;\tilde{F}_{-i}$
in $\sigma$, for any type $v_{i}$, the supremum expected payoff,
denoted by $U_{i}(v_{i}|\sigma)$, is 
\[
U_{i}(v_{i}|\sigma)=\max_{\beta\in\mathbb{R}_{+}}\;H_{-i,\sigma}(\beta)v_{i}-\beta.
\]
\end{enumerate}
\end{lem}
\begin{proof}
See Theorems 5 and 6 in \citet{Zheng2019}. 
\end{proof}
\begin{rem}
\label{rem:A-useful-interpretation}A useful interpretation of $c_{1,\sigma}c_{2,\sigma}=0$
is that the lowest types of $\tilde{F}_{1},\tilde{F}_{2}$ (denoted
by, say, $\underline{v}_{1},\underline{v}_{2}$) cannot both earn
positive expected payoffs because 
\[
U_{i}(\underline{v}_{i}|\sigma)=c_{-i,\sigma}\underline{v}_{i}.
\]
It also implies that the highest bid $x_{\sigma}$ must not be lower
than $\min\{\underline{v}_{1},\underline{v}_{2}\}$. Moreover, according
to Lemma 6 in \citet{Zheng2019}, the strategy of each player is monotone.
Thus, for the highest type of each player (denoted by, say, $\bar{v}_{i}$),
bidding $x_{\sigma}$ is a best response and the expected payoff is
\[
U_{i}(\bar{v}_{i}|\sigma)=\bar{v}_{i}-x_{\sigma}.
\]
\end{rem}

Let 
\[
\mathcal{E}_{i}(\tilde{F}_{i}):=\text{ the set of BNEs of }\ensuremath{\mathcal{G}(\tilde{F}_{i},\tilde{F}_{-i}).}
\]
The following lemma follows from Lemma \ref{general-all-pay} and
concerns the equilibria of one-sided complete information all-pay
auctions. 
\begin{lem}
\label{lem:one-sided-complete-info}For any $i\in\{1,2\}$, and any
$v_{i}^{*}\ne0$, $\mathcal{E}_{i}(\delta_{v_{i}^{*}})=\{\sigma^{*}\}$
such that 
\begin{align}
\forall\beta\in[0,x_{\sigma^{*}}]:H_{i,\sigma^{*}}(\beta) & =c_{i,\sigma^{*}}+\int_{0}^{\beta}\frac{1}{\tilde{F}_{-i}^{-1}\left(\frac{s}{v_{i}^{*}}+c_{-i,\sigma^{*}}\right)}ds;\label{eq:degenerate-boundary-cond}\\
\forall\beta\in[0,x_{\sigma^{*}}]:H_{-i,\sigma^{*}}(\beta) & =\frac{\beta}{v_{i}^{*}}+c_{-i,\sigma^{*}};\nonumber \\
c_{i,\sigma^{*}}c_{-i,\sigma^{*}} & =0;\nonumber \\
x_{\sigma^{*}} & =v_{i}^{*}(1-c_{-i,\sigma^{*}}).\nonumber 
\end{align}
\end{lem}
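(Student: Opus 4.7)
The plan is to specialize Lemma \ref{general-all-pay} to the setting $\tilde{F}_i = \delta_{v_i^*}$ and verify the four displayed relations by direct substitution, and then establish uniqueness of $\sigma^*$ by showing that the atoms $c_{1,\sigma^*}, c_{2,\sigma^*}$ and the highest bid $x_{\sigma^*}$ are pinned down. Existence is inherited from the general theory of two-player all-pay auctions underlying Lemma \ref{general-all-pay}.

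First I would observe that since $\tilde{F}_i = \delta_{v_i^*}$, the generalized inverse is $\tilde{F}_i^{-1}(x) = v_i^*$ for every $x \in (0,1]$. Plugging this into the expression for $H_{-i,\sigma}$ provided by Lemma \ref{general-all-pay} gives
\[
H_{-i,\sigma}(\beta) = c_{-i,\sigma} + \int_0^\beta \frac{1}{v_i^*}\, dy = \frac{\beta}{v_i^*} + c_{-i,\sigma},
\]
which is the second displayed equation. Substituting this expression for $H_{-i,\sigma}$ back into the formula for $H_{i,\sigma}$ from Lemma \ref{general-all-pay} immediately yields the first displayed equation. The boundary condition $H_{-i,\sigma}(x_\sigma) = 1$ then forces $x_\sigma = v_i^*(1 - c_{-i,\sigma})$, while $c_{i,\sigma} c_{-i,\sigma} = 0$ is inherited from Lemma \ref{general-all-pay}(a).

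For uniqueness I would exploit the remaining boundary condition $H_{i,\sigma}(x_\sigma) = 1$. Using $x_\sigma = v_i^*(1-c_{-i,\sigma})$ and the change of variable $t = s/v_i^* + c_{-i,\sigma}$ reduces this to
\[
c_{i,\sigma} + v_i^* \int_{c_{-i,\sigma}}^{1} \frac{dt}{\tilde{F}_{-i}^{-1}(t)} = 1.
\]
Combined with $c_{i,\sigma} c_{-i,\sigma} = 0$, this pins down the atoms uniquely: either $c_{-i,\sigma}=0$ and $c_{i,\sigma} = 1 - v_i^* \int_0^1 dt/\tilde{F}_{-i}^{-1}(t)$ (the relevant case when this quantity is nonnegative), or $c_{i,\sigma}=0$ and $c_{-i,\sigma}$ is the unique solution of $v_i^* \int_c^1 dt/\tilde{F}_{-i}^{-1}(t) = 1$, which exists and is unique because the left-hand side is continuous and strictly decreasing in $c$ on $[0,1]$, taking value $0$ at $c=1$ and exceeding $1$ at $c=0$ under the complementary condition. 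Given the unique triple $(x_{\sigma^*}, c_{1,\sigma^*}, c_{2,\sigma^*})$, the two bid distributions are then fully determined by the displayed formulas.

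The main obstacle is the uniqueness step, since existence and the structural formulas are essentially a direct substitution into Lemma \ref{general-all-pay}. Care is needed in verifying that the two cases $c_{-i,\sigma}=0$ and $c_{i,\sigma}=0$ exhaust all possibilities and do not give genuinely different equilibria at the boundary $v_i^* \int_0^1 dt/\tilde{F}_{-i}^{-1}(t) = 1$, where both cases coincide; the hypothesis $v_i^* \neq 0$ together with the compactness and non-degeneracy of $\mathrm{supp}\,\tilde{F}_{-i}$ ensures that all integrals involved are well-defined.
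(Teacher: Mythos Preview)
Your proposal is correct. The paper does not give its own proof but simply cites Lemma~13 in \cite{Zheng2019}; your derivation---specializing Lemma~\ref{general-all-pay} via $\tilde{F}_i^{-1}\equiv v_i^*$, reading off $H_{-i,\sigma}$ and $H_{i,\sigma}$, and then pinning down $(c_{i,\sigma^*},c_{-i,\sigma^*},x_{\sigma^*})$ from the two boundary conditions together with $c_{i,\sigma^*}c_{-i,\sigma^*}=0$---is exactly the natural route and recovers precisely the content of Remark~\ref{rem:boundary-cond-one-sided} along the way.
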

\begin{proof}
See Lemma 13 in \citet{Zheng2019}. 
\end{proof}
\begin{rem}
\label{rem:boundary-cond-one-sided}The boundary condition $H_{i,\sigma^{*}}(x_{\sigma^{*}})=1$
implies 
\begin{equation}
c_{i,\sigma^{*}}=1-v_{i}^{*}\int_{c_{-i,\sigma^{*}}}^{1}\frac{1}{\tilde{F}_{-i}^{-1}(s)}ds.\label{eq:boundary-cond-one-sided}
\end{equation}
It follows from the above that: 
\begin{enumerate}[label=(\roman{enumi}).]
\item If $c_{i,\sigma^{*}}>0$, then $c_{-i,\sigma^{*}}=0$. This in turn
implies that type $v_{i}^{*}$'s expected payoff $U_{i}(v_{i}^{*}|\sigma^{*})=0$
and $x_{\sigma^{*}}=v_{i}^{*}$. 
\item $x_{\sigma^{*}}<v_{i}^{*}$ is equivalent to $c_{-i,\sigma^{*}}>0$.
To see this, suppose first that $x_{\sigma^{*}}<v_{i}^{*}$. Then
it follows from above that $c_{-i,\sigma^{*}}>0$. On the other hand,
if $c_{-i,\sigma^{*}}>0$, then the probability of player $-i$ bidding
zero is strictly positive in $\sigma^{*}$,\footnote{It means that the lowest possible type must earn zero payoff in $\sigma^{*}$.}
which implies $x_{\sigma^{*}}>\inf\;\supp\;\tilde{F}_{-i}$ and 
\[
U_{i}(v_{i}^{*}|\sigma^{*})=v_{i}^{*}-x_{\sigma^{*}}=v_{i}^{*}c_{-i,\sigma^{*}}>0.
\]
It follows that $x_{\sigma^{*}}<v_{i}^{*}$. 
\end{enumerate}
\end{rem}
The following lemma concerns the expected payoff of type $v_{i}$
in an all-pay auction where player $-i$ believes that the common
belief of the game is $\mathcal{G}(v_{i}^{*},\tilde{F}_{-i})$, while
player $i$ secretly knows that his own type is $v_{i}$. 
\begin{lem}
\label{secret-bidding} Given the above $H_{-i,\sigma^{*}}$, for
any type $v_{i}$, if $v_{i}\le v_{i}^{*}$, then bidding zero is
a best response to $H_{-i,\sigma^{*}}$, and the expected payoff is
$c_{-i,\sigma^{*}}v_{i}$; if $v_{i}>v_{i}^{*}$, then bidding $x_{\sigma^{*}}$
is the best response, yielding an expected payoff of $v_{i}-x_{\sigma^{*}}$
(equivalently, $v_{i}-v_{i}^{*}(1-c_{-i,\sigma^{*}})$). 
\end{lem}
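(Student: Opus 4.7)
The plan is to directly optimize type $v_{i}$'s expected payoff $\Pi(\beta;v_{i}):=H_{-i,\sigma^{*}}(\beta)v_{i}-\beta$ over all $\beta\ge 0$, exploiting the explicit linear form of $H_{-i,\sigma^{*}}$ supplied by Lemma \ref{lem:one-sided-complete-info}. Because $\sigma^{*}$ is fully pinned down, the maximization reduces to elementary calculus on two pieces of the real line.

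I would first split the bid domain into the support $[0,x_{\sigma^{*}}]$ and the tail $(x_{\sigma^{*}},\infty)$. Substituting $H_{-i,\sigma^{*}}(\beta)=\beta/v_{i}^{*}+c_{-i,\sigma^{*}}$ on the support gives
\[
\Pi(\beta;v_{i}) \;=\; c_{-i,\sigma^{*}}v_{i}+\beta\!\left(\frac{v_{i}}{v_{i}^{*}}-1\right),
\]
a linear function of $\beta$ whose slope $v_{i}/v_{i}^{*}-1$ is exactly what separates the two cases of the lemma. On the tail, $H_{-i,\sigma^{*}}(\beta)=1$, so $\Pi(\beta;v_{i})=v_{i}-\beta$ is strictly decreasing; hence no bid above $x_{\sigma^{*}}$ can beat $\beta=x_{\sigma^{*}}$.

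For the first case ($v_{i}\le v_{i}^{*}$ with $c_{-i,\sigma^{*}}=0$), the slope on the support is non-positive, so the supremum over $[0,x_{\sigma^{*}}]$ is attained at $\beta=0$ with value $c_{-i,\sigma^{*}}v_{i}=0$. To rule out the tail, I would invoke Remark \ref{rem:boundary-cond-one-sided}, which under $c_{-i,\sigma^{*}}=0$ gives $x_{\sigma^{*}}=v_{i}^{*}$, so any $\beta>x_{\sigma^{*}}$ yields $v_{i}-\beta<v_{i}-v_{i}^{*}\le 0$, confirming $\beta=0$ as a global best response. For the second case ($v_{i}>v_{i}^{*}$), the slope on the support is strictly positive, so the maximum on $[0,x_{\sigma^{*}}]$ is at $\beta=x_{\sigma^{*}}$, with value $v_{i}-x_{\sigma^{*}}=v_{i}-v_{i}^{*}(1-c_{-i,\sigma^{*}})$ by the identity $x_{\sigma^{*}}=v_{i}^{*}(1-c_{-i,\sigma^{*}})$ from Lemma \ref{lem:one-sided-complete-info}. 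Combined with the tail monotonicity, this yields the global best response stated.

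I do not expect a genuine obstacle: the only mild subtlety is the possible atom of $H_{-i,\sigma^{*}}$ at zero when $c_{-i,\sigma^{*}}>0$, but the tie-breaking convention recalled in the paper ensures that bidding zero wins with probability $c_{-i,\sigma^{*}}$, so the linear formula for $\Pi(\beta;v_{i})$ is valid at the endpoint $\beta=0$ without any discontinuity, and the argument is otherwise a routine comparison of the payoff on the support with the strictly decreasing tail payoff.
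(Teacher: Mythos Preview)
Your argument is correct. The paper does not actually prove this lemma; it simply cites Lemma 14 of \cite{Zheng2019}. Your proposal instead gives a self-contained proof by exploiting the explicit linear form of $H_{-i,\sigma^{*}}$ from Lemma \ref{lem:one-sided-complete-info}, reducing the problem to maximizing an affine function on $[0,x_{\sigma^{*}}]$ and a strictly decreasing function on the tail. This direct route has the advantage of being fully verifiable within the paper and makes transparent exactly where the threshold $v_{i}=v_{i}^{*}$ arises (the sign of the slope $v_{i}/v_{i}^{*}-1$), whereas the paper's approach outsources the argument but keeps the exposition shorter. Your handling of the two cases matches the lemma's hypotheses precisely, and your remark on the tie-breaking convention is accurate though, as you note, not actually needed in either case treated here since $c_{-i,\sigma^{*}}=0$ is assumed in the first case and the optimal bid is bounded away from zero in the second.
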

\begin{rem}
Technically, as explained in Lemma 14 and equation (16) in \citet{Zheng2019},
if $v_{i}<v_{i}^{*}$ and $c_{-i,\sigma^{*}}>0$, then the best reply
to $H_{-i,\sigma^{*}}$ is null, but $\lim_{\epsilon\downarrow0}\sup\;\mathrm{BR}_{i}(v_{i},\epsilon|\sigma^{*})=0$
($\mathrm{BR}$ stands for best response). Since the supremum expected
payoff of type $v_{i}$ from bidding positive bids is $c_{-i,\sigma^{*}}v_{i}$,
we simplify the exposition by stating that bidding zero is a best
response to $H_{-i,\sigma^{*}}$ as if the tie-breaking rule is altered
in this particular case to assigning a probability one of winning
to type $v_{i}$ (against those types of player $-i$ who also bid
zero according to $H_{-i,\sigma^{*}}$). For the proof of Lemma \ref{secret-bidding},
see Lemma 14 in \citet{Zheng2019}. 

The following lemma states an intuitive result about the positivity
of the expected payoff of each player's highest type. 
\end{rem}
\begin{lem}
\label{lem:highest-type-earns-positively}In any all-pay auction $\mathcal{G}(F_{1},F_{2})$,
if $\bar{v}_{i}=\sup\;\supp\;F_{i}>\inf\;\supp\;F_{i}=\underline{v}_{i}$,
then $x_{\sigma}<\bar{v}_{i}$ and $U_{i}(\bar{v}_{i}|\sigma)>0$
in any BNE $\sigma$. 
\end{lem}
\begin{proof}
The all-pay auction structure implies $x_{\sigma}\le\min\{\bar{v}_{1},\bar{v}_{2}\}$.
Suppose, to the contrary, that $x_{\sigma}=\bar{v}_{i}$. By Lemma
\ref{general-all-pay}, 
\[
H_{-i,\sigma}(x_{\sigma})=c_{-i,\sigma}+\int_{0}^{x_{\sigma}}\frac{1}{F_{i}^{-1}\left(H_{i,\sigma}(y)\right)}\,dy.
\]
Let $X_{-\bar{v}_{i}}$ be the set of bids by type $v_{i}<\bar{v}_{i}$
in $\sigma$. Thus, for $y\in X_{-\bar{v}_{i}}$, $F_{i}^{-1}\left(H_{i,\sigma}(y)\right)<\bar{v}_{i}$
or equivalently, $1/F_{i}^{-1}\left(H_{i,\sigma}(y)\right)>1/\bar{v}_{i}$.
This implies 
\begin{align*}
\int_{0}^{x_{\sigma}}\frac{1}{F_{i}^{-1}\left(H_{i,\sigma}(y)\right)}\,dy & >\int_{0}^{x_{\sigma}}\frac{1}{\bar{v}_{i}}\,dy=\int_{0}^{\bar{v}_{i}}\frac{1}{\bar{v}_{i}}\,dy=1
\end{align*}
Since $c_{-i,\sigma}\ge0$, it follows that $H_{-i,\sigma}(x_{\sigma})>1$,
which is impossible. Thus, type $\bar{v}_{i}$ earns a positive expected
payoff in any $\sigma$. 
\end{proof}

Finally, we establish a result useful for ranking $c_{1,\sigma}$
and $c_{2,\sigma}$ across different $\mathcal{G}(\tilde{F}_{1},\tilde{F}_{2})$.
\begin{lem}
\label{lem:key-identity} Let $\sigma$ be a BNE of any all-pay auction
$\mathcal{G}(\tilde{F}_{1},\tilde{F}_{2})$. Then 
\begin{equation}
\int_{c_{2,\sigma}}^{1}\frac{1}{\tilde{F}_{2}^{-1}(s)}ds=\int_{c_{1,\sigma}}^{1}\frac{1}{\tilde{F}_{1}^{-1}(s)}ds.\label{eq:key-identity}
\end{equation}
\end{lem}
\begin{proof}
By Lemma \ref{general-all-pay}, for $i=1,2$,
\[
H_{i,\sigma}'(\beta)=\frac{1}{\tilde{F}_{-i}^{-1}\bigl(H_{-i,\sigma}(\beta)\bigr)}.
\]
Trivially, 
\[
H_{1,\sigma}'(\beta)H_{2,\sigma}'(\beta)\equiv H_{2,\sigma}'(\beta)H_{1,\sigma}'(\beta)
\]
which can then be rewritten as 
\[
\frac{1}{\tilde{F}_{2}^{-1}\bigl(H_{2,\sigma}(\beta)\bigr)}H_{2,\sigma}'(\beta)=\frac{1}{\tilde{F}_{1}^{-1}\bigl(H_{1,\sigma}(\beta)\bigr)}H_{1,\sigma}'(\beta).
\]
Integrating both sides with respect to $\beta$ over $[0,x_{\sigma}]$,
we obtain the desired equation by a change of variables, because $H_{i,\sigma}(0)=c_{i,\sigma}$
and $H_{i,\sigma}(x_{\sigma})=1$. 
\end{proof}
This result can be interpreted as an alternative formulation of the
boundary condition that highlights the role of the lower limit of
integration. To see this, observe that when $\tilde{F}_{i}=\delta_{v_{i}^{*}}$,
equation \eqref{eq:key-identity} reduces to \eqref{eq:boundary-cond-one-sided}. 

The result above is useful for comparing two all-pay auctions involving
the same player. 
\begin{cor}
\label{cor:key-identity}Suppose player 1 participates in two all-pay
auctions, $\mathcal{G}_{a}$ and $\mathcal{G}_{b}$, with opponent
type distributions $\tilde{F}_{2,a}$ and $\tilde{F}_{2,b}$, respectively.
If $c_{1,\sigma^{a}}=c_{1,\sigma^{b}}$, then by Lemma \ref{lem:key-identity},
\[
\int_{c_{2,\sigma^{a}}}^{1}\frac{1}{\tilde{F}_{2,a}^{-1}(s)}ds=\int_{c_{2,\sigma^{b}}}^{1}\frac{1}{\tilde{F}_{2,b}^{-1}(s)}ds.
\]
\end{cor}
Finally, the following results, implied by Lemma \ref{lem:key-identity},
can be used to determine the positivity and the value of $c_{i,\sigma}$.
\begin{cor}
For any all-pay auction $\mathcal{G}(\tilde{F}_{1},\tilde{F}_{2})$
and any BNE $\sigma$ of $\mathcal{G}(\tilde{F}_{1},\tilde{F}_{2})$,
$c_{i,\sigma}=0$ if and only if 
\[
\int_{0}^{1}\frac{1}{\tilde{F}_{i}^{-1}(s)}ds\le\int_{0}^{1}\frac{1}{\tilde{F}_{-i}^{-1}(s)}ds.
\]
Equivalently, $c_{i,\sigma}>0$ if and only if 
\begin{equation}
\int_{0}^{1}\frac{1}{\tilde{F}_{i}^{-1}(s)}ds>\int_{0}^{1}\frac{1}{\tilde{F}_{-i}^{-1}(s)}ds.\label{eq:c_i_sigma_positive}
\end{equation}
Furthermore, if $c_{i,\sigma}>0$, then it is given by 
\begin{equation}
\int_{c_{i,\sigma}}^{1}\frac{1}{\tilde{F}_{i}^{-1}(s)}ds=\int_{0}^{1}\frac{1}{\tilde{F}_{-i}^{-1}(s)}ds.\label{eq:c_i_sigma_value}
\end{equation}
\end{cor}
\begin{proof}
We prove the second statement; the first follows by equivalence.

First, suppose \eqref{eq:c_i_sigma_positive} holds. We can conclude
$c_{i,\sigma}>0$ because, if $c_{i,\sigma}=0$, then by Lemma \ref{lem:key-identity}
\[
\int_{0}^{1}\frac{1}{\tilde{F}_{i}^{-1}(s)}ds=\int_{c_{-i,\sigma}}^{1}\frac{1}{\tilde{F}_{-i}^{-1}(s)}ds,
\]
which is impossible for any $c_{-i,\sigma}\ge0$. 

Second, it follows from above that if $c_{i,\sigma}>0$, then $c_{-i,\sigma}=0$.
By Lemma \ref{lem:key-identity}, \eqref{eq:c_i_sigma_value} holds,
which implies \eqref{eq:c_i_sigma_positive}. 
\end{proof}
In the corollary above, $\int_{0}^{1}\frac{1}{\tilde{F}_{i}^{-1}(s)}ds$
can be interpreted as the overall strength of player $-i$ relative
to player $i$. The corollary says that if \eqref{eq:c_i_sigma_positive}
is true, then player $-i$ has higher overall strength; consequently,
a positive measure of the weakest types of player $i$ bid zero and
can only earn zero expected payoff in the BNE. It also implies that
if player $-i$ becomes relatively stronger and/or player $i$ becomes
relatively weaker in the sense of FOSD, then the conclusion continues
to hold; conversely, if player $-i$ becomes weaker and/or player
$i$ becomes stronger in the sense of FOSD, then $c_{i,\sigma}$ decreases.

\section{The bribing model \protect}\label{sec:The-bribing-game}

It is immediate that, in any peaceful equilibrium, at most one bribe
can be accepted with probability one. Thus, any peaceful equilibrium
is pooling.

Consider a peaceful equilibrium in which all types of player 1 pool
on the bribe $\bar{b}\in[0,\underline{v}_{1}]$, which is accepted
by all types of player 2. In the equilibrium, type $v_{1}$'s payoff
is $v_{1}-\bar{b}$, whereas type $v_{2}$'s payoff is $\bar{b}$.
In such a peaceful equilibrium, there are two types of off-path deviations.
On one hand, rejection is off the path. On the other hand, there are
many unsent bribes.

We first consider player 2's rejection.

Suppose player 2 unexpectedly rejects the bribe $\bar{b}$. In the
continuation auction, player 1's type distribution is the same as
the prior. Because rejection is off the path, player 1 may hold arbitrary
beliefs about player 2's type distribution.

It is not profitable for all types of player 2 to reject $\bar{b}$
if and only if the highest type of player 2, namely $\bar{v}_{2}$,
earns an expected payoff no higher than $\bar{b}$, because for any
belief the expected payoff function of player 2 is increasing (by
Lemma \ref{general-all-pay}). Let the expected payoff of type $v_{2}$
in the continuation auction following player 2's rejection be $U_{2}(v_{2}|\sigma_{2})$
for a belief $\tilde{F}_{2}$. According to \citet{Zheng2019}, for
type $\bar{v}_{2}$, the highest possible $U_{2}(\bar{v}_{2}|\sigma_{2})$
is achieved when player 1 believes $v_{2}=\underline{v}_{2}$, and
the lowest possible $U_{2}(\bar{v}_{2}|\sigma_{2})$ is achieved when
player 1 believes $v_{2}=\bar{v}_{2}$. Let $\bar{\sigma}_{2}$ denote
a BNE induced by belief $v_{2}=\bar{v}_{2}$ and $\underline{\sigma}_{2}$
a BNE induced by belief $v_{2}=\underline{v}_{2}$. Thus peace implementability
requires that $U_{2}(\bar{v}_{2}|\bar{\sigma}_{2})\le\bar{b}$ whereas
peace security requires $U_{2}(\bar{v}_{2}|\underline{\sigma}_{2})\le\bar{b}$.
Equivalently, the former requires 
\begin{equation}
\bar{v}_{2}c_{1,\bar{\sigma}_{2}}\le\bar{b},\label{imple-cond-bidder2}
\end{equation}
whereas the latter requires 
\begin{equation}
\bar{v}_{2}-\underline{v}_{2}(1-c_{1,\underline{\sigma}_{2}})\le\bar{b},\label{secur-cond-bidder2}
\end{equation}
where the constants are defined by

\[
c_{1,\bar{\sigma}_{2}}:=\inf\{c_{1}\in[0,1]:\bar{v}_{2}\int_{c_{1}}^{1}\frac{1}{F_{1}^{-1}(s)}ds\le1\},
\]
\begin{gather*}
c_{1,\underline{\sigma}_{2}}:=\inf\{c_{1}\in[0,1]:\underline{v}_{2}\int_{c_{1}}^{1}\frac{1}{F_{1}^{-1}(s)}ds\le1\}.
\end{gather*}

Surprisingly, we find that, in contrast to \citet{Zheng2019}, peace
security is impossible in our model. 
\begin{thm}
\label{thm:security-impossible}Peace is not securable.\footnote{Although we assume atomless distributions for players' types, the
result holds for arbitrary independent type distributions, provided
player 1's distribution is non-degenerate, as implied by the proof.
(When player 1's type is degenerate, the highest-type rent identified
in Lemma \ref{lem:highest-type-earns-positively} vanishes, and, as
discussed below, peace can be secured.)} 
\end{thm}
\begin{proof}
See Appendix \ref{Proof-security-impossible}. 
\end{proof}
The intuition for this impossibility is as follows. As in \citet{Zheng2019},
peace security critically depends on the expected off-path payoffs
of the highest types of both players. However, the ability to offer
flexible off-path bribes in our model restricts candidate equilibrium
bribes to a unique value: the highest possible expected payoff that
type $\bar{v}_{2}$ earns in the continuation auction when she rejects
the bribe and is believed to be the lowest type. For player 1, paying
anything above is a waste of money. However, on top of this, player
1's highest type retains an incentive to extract some extra surplus
by offering an even lower off-path bribe. For peace to be secure,
such an undercut must lead to a positive probability of rejection
regardless of the belief; otherwise, it constitutes a profitable deviation.
Deterring such a deviation requires player 2 to be sufficiently strong
relative to player 1, specifically requiring $\bar{v}_{1}\le\underline{v}_{2}$.

The impossibility then arises from the nature of asymmetric information.
Consider a simplified case where player 1's type is degenerate at
$\bar{v}_{1}$ and player 2's type is degenerate at $\underline{v}_{2}$.
Suppose $2\bar{v}_{1}=\underline{v}_{2}>0$. In this setting, peace
can be secured if player 1 surrenders his entire valuation to player
2 (that is, no player has an incentive to deviate). As soon as player
1's type distribution becomes non-degenerate, the nature of all-pay
auctions ensures that the highest type always earns a positive expected
payoff (a form of information rent). This inherent surplus provides
a persistent incentive for player 1 to undercut the on-path bribe,
rendering absolute security impossible.

We next examine peace implementability by considering player 1's deviation
to off-path bribes.

Let $\pi_{1}(v_{1}|b,\tilde{F}_{1})$ be the expected payoff of type
$v_{1}$ from a deviation to an off-path bribe $b$ when player 2's
belief is $\tilde{F}_{1}$ which induces a continuation auction $\mathcal{G}(\tilde{F}_{1},\tilde{F}_{2})$. 
\begin{lem}
\label{slope-lower-than-1} For any off-path bribe $b$, for any $\tilde{F}_{1}$,
and any BNE $\sigma$ of any induced $\mathcal{G}(\tilde{F}_{1},\tilde{F}_{2})$,
$\pi_{1}(v_{1}|b,\tilde{F}_{1})$ is continuous and non-decreasing
in $v_{1}$, with a derivative no greater than one wherever it exists. 
\end{lem}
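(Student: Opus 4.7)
The plan is to decompose $\pi_{1}(v_{1}|b,\tilde{F}_{1})$ into its acceptance and rejection components and then bound each of their rates of increase in $v_{1}$ separately.

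First, I would fix an off-path bribe $b$, a belief $\tilde{F}_{1}$, and an associated consistent reply inducing the BNE $\sigma$ of the continuation game $\mathcal{G}(\tilde{F}_{1},\tilde{F}_{2})$. By Lemma \ref{lmm-c2sig-zero}, the reply may be taken to have an interval rejection set $[a_{2,\sigma}(b),\bar{v}_{2}]$, so the acceptance probability $q:=F_{2}(a_{2,\sigma}(b))$ is a well-defined constant that depends on $b$, $\tilde{F}_{1}$, and $\sigma$, but \emph{not} on $v_{1}$. This yields the decomposition
$$\pi_{1}(v_{1}|b,\tilde{F}_{1})=q(v_{1}-b)+(1-q)\,U_{1}(v_{1}|\sigma),$$
where $U_{1}(v_{1}|\sigma)$ is type $v_{1}$'s continuation-auction payoff.

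Next, I would invoke Lemma \ref{general-all-pay}(b), which gives $U_{1}(v_{1}|\sigma)=\max_{\beta\in\mathbb{R}_{+}}\bigl[H_{2,\sigma}(\beta)v_{1}-\beta\bigr]$. Since each map $v_{1}\mapsto H_{2,\sigma}(\beta)v_{1}-\beta$ is affine with slope $H_{2,\sigma}(\beta)\in[0,1]$, the upper envelope $U_{1}(\cdot|\sigma)$ is convex, nondecreasing, continuous, and has left/right derivatives in $[0,1]$; equivalently, $U_{1}(v_{1}'|\sigma)-U_{1}(v_{1}|\sigma)\le v_{1}'-v_{1}$ whenever $v_{1}<v_{1}'$. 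Combining this with the fact that the acceptance branch $v_{1}\mapsto v_{1}-b$ is affine with slope exactly one, I obtain, for $v_{1}<v_{1}'$,
$$\pi_{1}(v_{1}'|b,\tilde{F}_{1})-\pi_{1}(v_{1}|b,\tilde{F}_{1})=q(v_{1}'-v_{1})+(1-q)\bigl[U_{1}(v_{1}'|\sigma)-U_{1}(v_{1}|\sigma)\bigr]\le v_{1}'-v_{1},$$
which is the desired bound on the rate of increase; continuity of $\pi_{1}$ in $v_{1}$ is immediate from continuity of the two summands.

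No step looks genuinely hard: the argument is essentially an envelope-type observation once the payoff is split into its two branches. The only subtle point that needs to be made explicit is that $q$ is independent of $v_{1}$, which hinges on the fact that player 2's acceptance/rejection decision uses only her own type and her belief about player 1's type, so $q$ is pinned down by $(b,\tilde{F}_{1},\sigma)$ alone. Everything else reduces to standard manipulation of pointwise suprema of affine functions.
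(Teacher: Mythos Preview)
Your proposal is correct and follows essentially the same route as the paper's own proof: both decompose $\pi_{1}$ into the acceptance term $q(v_{1}-b)$ and the rejection term $(1-q)U_{1}(v_{1}|\sigma)$, invoke Lemma~\ref{general-all-pay}(b) to write $U_{1}$ as the pointwise maximum of affine functions with slopes in $[0,1]$, and conclude the Lipschitz bound from this envelope structure. If anything, your write-up is slightly more explicit (you spell out that $q$ is independent of $v_{1}$ and use the increment inequality directly rather than appealing to a.e.\ differentiability), but there is no substantive difference in approach.
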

\begin{proof}
See Appendix \ref{proof-slope-lower-than-1}. 
\end{proof}
The preceding result implies that implementability critically depends
on the expected payoff of type $\underline{v}_{1}$ from deviations
to all possible off-path bribes, because the equilibrium payoff of
player 1 increases with type $v_{1}$ at a rate of one. That is,
if it is not profitable for type $\underline{v}_{1}$ to deviate to
some bribe $b$ under belief $\tilde{F}_{1}$, then no type $v_{1}>\underline{v}_{1}$
will profitably deviate to the same bribe under that belief.

Hence, examining peace implementability amounts to searching for the
belief $\tilde{F}_{1}$ that minimizes $\pi_{1}(\underline{v}_{1}|b,\tilde{F}_{1})$
for each $b$. Peace is implementable only if the maximum of these
minimized payoffs across all off-path bribes $b$ does not exceed
the equilibrium payoff of type $\underline{v}_{1}$. That is, peace
implementability requires 
\[
\max_{b}\min_{\tilde{F}_{1}}\pi_{1}(\underline{v}_{1}|b,\tilde{F}_{1})\le\underline{v}_{1}-\bar{b}.
\]

We continue to examine the expected payoff of type $v_{1}$ for different
$b$. 

We first consider the zero bribe. When the on-path bribe is positive,
the zero bribe is off-path and may admit multiple consistent replies.
This multiplicity arises because some types of player 2 may be indifferent
between accepting a zero bribe and rejecting it and bidding zero in
the continuation auction, as both actions yield zero payoff. 

We show below that among all beliefs and all consistent replies, full
rejection under the degenerate belief $\delta_{\underline{v}_{1}}$
minimizes the expected payoff of type $\underline{v}_{1}$ from deviating
to the zero bribe.
\begin{lem}
\label{lem:zero-bribe-min-reply} For any given belief $\tilde{F}_{1}$,
among all consistent replies to the off-path zero bribe, full rejection
minimizes the expected payoff of type $\underline{v}_{1}$.
\end{lem}
\begin{proof}
See Appendix \ref{Proof-of-zero-bribe}. 
\end{proof}
\begin{lem}
\label{lem:zero-bribe-worst-belief} Given that the consistent reply
to the off-path zero bribe is full rejection, among all beliefs, the
degenerate belief $\delta_{\underline{v}_{1}}$ minimizes the expected
payoff of type $\underline{v}_{1}$. 
\end{lem}
\begin{proof}
See Appendix \ref{Proof-of-worst-belief-to-zero-bribe}. 
\end{proof}

We next consider positive bribes. As mentioned above, for any off-path
bribe $b>0$ and any given $\tilde{F}_{1}$, if the rejection set
is non-empty, then by consistency it must be an interval: the lowest
type in the set is indifferent between accepting the positive bribe
and rejecting it, while any higher type strictly prefers rejection.
We use $a_{2,\sigma}(b)$ to denote this lowest rejecting type.\footnote{Although $a_{2,\sigma}(b)$ depends on $b$, we suppress the argument
to save notation in the proofs.} Since type $a_{2,\sigma}(b)$ must earn positive expected payoff
by bidding zero in the BNE $\sigma$ of the induced continuation auction,
we have $c_{1,\sigma}>0$ and $c_{2,\sigma}=0$, because $U_{2}(a_{2,\sigma}(b)|\sigma)=c_{1,\sigma}a_{2,\sigma}(b)$.
Furthermore, since type $\underline{v}_{1}$ is the lowest type of
player 1 and $c_{2,\sigma}=0$, he must earn zero expected payoff
in the continuation auction. We summarize these observations in the
following lemma.

\begin{lem}
\label{lem:c2sig-zero-p1-form}For any off-path bribe $b>0$, for
any given $\tilde{F}_{1}$, if the rejection set is non-empty, then
it is an interval $[a_{2,\sigma}(b),\bar{v}_{2}]$ for some $a_{2,\sigma}(b)$,
where $\sigma$ is the BNE of the continuation auction. In $\sigma$,
$c_{2,\sigma}=0$. Furthermore,
\begin{equation}
\pi_{1}(\underline{v}_{1}|b>0,\tilde{F}_{1})=F_{2}(a_{2,\sigma}(b))(\underline{v}_{1}-b).\label{eq:Pi-v1-low-bar}
\end{equation}
\end{lem}

The structure of $\pi_{1}(\underline{v}_{1}|b,\tilde{F}_{1})$ implies
that for each $b>0$, $\pi_{1}(\underline{v}_{1}|b,\tilde{F}_{1})$
is minimized by the lowest possible $a_{2,\sigma}(b)$ induced by
some $\tilde{F}_{1}$. Let $a_{2,\underline{\sigma}}(b)$ be the lowest
rejecting type of player 2 when player 2's belief is $\delta_{\underline{v}_{1}}$
and some BNE $\underline{\sigma}$ is induced. The following result
shows that for each $b>0$, $\pi_{1}(\underline{v}_{1}|b,\tilde{F}_{1})$
is minimized by the belief $\delta_{\underline{v}_{1}}$, or equivalently,
by $a_{2,\underline{\sigma}}(b)$. 

\begin{lem}
\label{lem:a2-lowest-highest} For any given off-path bribe $b>0$,
among all possible $\tilde{F}_{1}$ and any BNE $\sigma$ of any induced
$\mathcal{G}(\tilde{F}_{1},\tilde{F}_{2})$, $x_{\underline{\sigma}}\le x_{\sigma}$
and $a_{2,\underline{\sigma}}(b)\le a_{2,\sigma}(b)$. Hence, for
any $b>0$,
\[
\pi_{1}(\underline{v}_{1}|b,\delta_{\underline{v}_{1}})=F_{2}(a_{2,\underline{\sigma}}(b))(\underline{v}_{1}-b)=\min_{\tilde{F}_{1}}\;\pi_{1}(\underline{v}_{1}|b,\tilde{F}_{1})
\]
\end{lem}
\begin{proof}
See Appendix \ref{Proof-of-Lemma-a2-lowest-highest}. 
\end{proof}

Below we derive the conditions for $a_{2,\underline{\sigma}}(b)\le\bar{v}_{2}$
and show that the solution is unique.

Let $\Phi_{2}(v_{2}|x)$ be the probability distribution of $v_{2}$
conditional on $v_{2}\ge x$, i.e., 
\[
\Phi_{2}(v_{2}|x)=\frac{F_{2}(v_{2})-F_{2}(x)}{1-F_{2}(x)},
\]
and denote the inverse function by $\Phi_{2}^{-1}(\cdot|x)$. Let
\[
\mathcal{I}_{2}(x):=\int_{0}^{1}\frac{1}{\Phi_{2}^{-1}(s|x)}ds.
\]
As $x$ increases, $\Phi_{2}(v_{2}|x)$ becomes more FOSD. Hence,
$\mathcal{I}_{2}(x)$ is continuous and strictly decreasing in $x$.
Moreover, $\mathcal{I}_{2}(\underline{v}_{2})=\int_{0}^{1}\bigl(F_{2}^{-1}(s)\bigr)^{-1}ds.$

Suppose first that for some off-path $b>0$, the belief $\delta_{\underline{v}_{1}}$
induces a consistent reply with a non-empty rejection set $[a_{2,\underline{\sigma}},\bar{v}_{2}]$,
where $a_{2,\underline{\sigma}}\in(\underline{v}_{2},\bar{v}_{2})$.
In the continuation auction $\mathcal{G}(\delta_{\underline{v}_{1}},F_{2}(v_{2}|v_{2}\ge a_{2,\underline{\sigma}}))$,
we have $c_{2,\underline{\sigma}}=0$ by Lemma \ref{lem:c2sig-zero-p1-form}.
Applying the boundary condition $H_{1,\underline{\sigma}}(x_{\underline{\sigma}})=1$
to equation (\ref{eq:degenerate-boundary-cond}) from Lemma \ref{lem:one-sided-complete-info},
we obtain 
\begin{equation}
c_{1,\underline{\sigma}}=1-\underline{v}_{1}\mathcal{I}_{2}(a_{2,\underline{\sigma}}).\label{boundary-degen-new2}
\end{equation}

\begin{lem}
\label{lem:a2lowBar-exis-uniq}Consider an off-path bribe $b>0$.
Suppose the induced belief $\tilde{F}_{1}$ is $\delta_{\underline{v}_{1}}$.
If $b>\bar{v}_{2}-\underline{v}_{1}$, the consistent reply is full
acceptance; if $b\le\bar{v}_{2}-\underline{v}_{1}$, a consistent
reply with a non-empty rejection set $[a_{2,\underline{\sigma}}(b),\bar{v}_{2}]$
exists and is unique, and 
\begin{equation}
a_{2,\underline{\sigma}}(b)=\begin{cases}
\underline{v}_{2} & \mathrm{if}\;b\le b_{\min}\\
a_{2} & \mathrm{otherwise}
\end{cases}\label{eq:a2sigma-bribing}
\end{equation}
where\footnote{Observe that equation \eqref{boundary-degen-new2} implies $\underline{v}_{1}\mathcal{I}_{2}(\underline{v}_{2})\le1$,
so $b_{\min}\ge0$.}
\[
b_{\min}:=\underline{v}_{2}\left(1-\underline{v}_{1}\mathcal{I}_{2}(\underline{v}_{2})\right)
\]
and $a_{2}$ is the unique solution to
\begin{equation}
b=a_{2}\left(1-\underline{v}_{1}\mathcal{I}_{2}(a_{2})\right).\label{interpret-as-bidding-1st-price}
\end{equation}
Furthermore, $a_{2,\underline{\sigma}}(b)$ is continuous and strictly
increasing for $b>b_{\min}$. 
\end{lem}
\begin{proof}
See Appendix \ref{proof-a2-exis-uniq}. 
\end{proof}

By the result above, if $b_{\min}>0$, then the consistent reply is
full rejection for $b\in(0,b_{\min}]$. On the other hand, if $b_{\min}=0$,
then for all $b>0$, the consistent reply is partial rejection. Furthermore,
by Lemma \ref{lem:a2-lowest-highest}, the lowest expected payoff
of type $\underline{v}_{1}$ from the deviation is $\pi_{1}(\underline{v}_{1}|b>0,\delta_{\underline{v}_{1}})=F_{2}(a_{2,\underline{\sigma}}(b))(\underline{v}_{1}-b)$.
As $b$ approaches zero, the limit of the rejection set is full rejection
and $\lim_{b\downarrow0}\pi_{1}(\underline{v}_{1}|b>0,\delta_{\underline{v}_{1}})=0$. 

On the other hand, for the zero bribe, full rejection is always a
consistent reply since any type $v_{2}$ weakly prefers rejection
to acceptance. From Lemma \ref{lem:zero-bribe-worst-belief}, for
the zero off-path bribe, the expected payoff of type $\underline{v}_{1}$
from deviation is minimized at belief $\delta_{\underline{v}_{1}}$
and a full-rejection reply. Let $\pi_{1}(\underline{v}_{1}|b=0,\mathcal{G}(\delta_{\underline{v}_{1}},F_{2}))$
be the expected payoff. Observe that even though it is a minimum,
$\pi_{1}(\underline{v}_{1}|b=0,\mathcal{G}(\delta_{\underline{v}_{1}},F_{2}))$
may still be positive for some prior distributions. That is, $\pi_{1}(\underline{v}_{1}|b=0,\mathcal{G}(\delta_{\underline{v}_{1}},F_{2}))$
may not equal $\lim_{b\downarrow0}\pi_{1}(\underline{v}_{1}|b>0,\delta_{\underline{v}_{1}}).$
Hence, following the analysis above, a natural and useful question
(for examining the conditions for peace implementability) is whether
for some prior distributions, $\pi_{1}(\underline{v}_{1}|b=0,\mathcal{G}(\delta_{\underline{v}_{1}},F_{2}))$
could exceed the supremum of the minimum expected payoff from deviation
to positive bribes, namely $\sup_{b>0}\;\pi_{1}(\underline{v}_{1}|b,\delta_{\underline{v}_{1}})$. 

\begin{lem}
\label{lem:zero-bribe-payoff-is-not-max}Suppose $\underline{v}_{1}<\bar{v}_{2}$.
Suppose further that upon receiving an off-path zero bribe, player
2 forms a belief $\delta_{\underline{v}_{1}}$ and the consistent
reply is full rejection. Then\footnote{As explained in the proof, $\sup_{b>0}\;\pi_{1}(\underline{v}_{1}|b,\delta_{\underline{v}_{1}})$
can be rewritten as $\max_{b>0}\;\pi_{1}(\underline{v}_{1}|b,\delta_{\underline{v}_{1}})$.} 
\begin{equation}
\pi_{1}(\underline{v}_{1}|b=0,\mathcal{G}(\delta_{\underline{v}_{1}},F_{2}))\le\max_{b>0}\;\pi_{1}(\underline{v}_{1}|b,\delta_{\underline{v}_{1}}).\label{eq:zero-bribe-payoff-no-greater-positive-bribe}
\end{equation}
\end{lem}
\begin{proof}
See Appendix \ref{Proof-of-Lemma-zero-bribe-payoff-is-not-sup}.
\end{proof}

Intuitively, $\pi_{1}(\underline{v}_{1}|b=0,\mathcal{G}(\delta_{\underline{v}_{1}},F_{2}))$
is the expected payoff of type $\underline{v}_{1}$ from competing
directly with player 2 in the all-pay auction when his valuation is
of complete information. On the other hand, $\sup_{b>0}\;\pi_{1}(\underline{v}_{1}|b,\delta_{\underline{v}_{1}})$
is the highest possible expected payoff of type $\underline{v}_{1}$
when his valuation is also of complete information but he has the
opportunity to collude with player 2 through bribes. Thus, although
in the case of unsuccessful collusion he can only earn zero expected
payoff in the continuation auction, Lemma \ref{lem:zero-bribe-payoff-is-not-max}
confirms that this strategic option weakly dominates direct competition.

We are now ready to characterize the necessary and sufficient conditions
for peace implementability.
\begin{thm}
\label{thm:Peace-implementability}If $\bar{v}_{2}\le\underline{v}_{1}$,
then peace is implementable (through the zero bribe) if and only if
\[
c_{1,\bar{\sigma}_{2}}=0.
\]
If $\bar{v}_{2}>\underline{v}_{1}$, then peace is implementable if
and only if 
\begin{equation}
\bar{v}_{2}c_{1,\bar{\sigma}_{2}}+\tilde{\pi}_{1}^{*}(\underline{v}_{1})\le\underline{v}_{1},\label{imple-thm}
\end{equation}
where 
\begin{equation}
\tilde{\pi}_{1}^{*}(\underline{v}_{1}):=\max_{a_{2}\in[\underline{v}_{2},\bar{v}_{2}]}\;F_{2}(a_{2})\left(\underline{v}_{1}-a_{2}\left(1-\underline{v}_{1}\mathcal{I}_{2}(a_{2})\right)\right).\label{eq:pi1-tilde}
\end{equation}
Furthermore, all peaceful equilibria are robust.\footnote{As noted in \citet{Zheng2019}, the same equivalence between robustness
and implementability holds in his mediation model.} 
\end{thm}
\begin{proof}
See Appendix \ref{Proof-of-Theorem-implementability}. 
\end{proof}
\begin{rem}
It follows from the proof that when (\ref{imple-thm}) holds, any
$\bar{b}\in[\bar{v}_{2}c_{1,\bar{\sigma}_{2}},\,\underline{v}_{1}-\tilde{\pi}_{1}^{*}(\underline{v}_{1})]$
can sustain a peaceful equilibrium. Among all peaceful equilibria,
the maximum payoff of player 1 is achieved in the one with $\bar{b}=\bar{v}_{2}c_{1,\bar{\sigma}_{2}}$. 
\end{rem}
The theorem exhibits a natural dichotomy driven by the relative strength
of the two players. When player 1 is strong enough that his lowest
type exceeds player 2's highest type ($\bar{v}_{2}\leq\underline{v}_{1}$),
peace is implementable if and only if $c_{1,\bar{\sigma}_{2}}=0$.
This single condition ensures that player 2's highest type has no
incentive to reject the offer under the most optimistic belief of
player 1. No constraint on player 1's own deviation incentives is
needed: because even player 1's weakest type is stronger than any
type of player 2 (except for possibly type $\bar{v}_{2}$), all types
of player 1 can coerce player 2 into acceptance and are content to
pool.

When player 1 is weaker ($\bar{v}_{2}>\underline{v}_{1}$), he has
an incentive to gamble off the path, in particular, the weakest type.
Conceptually, the right-hand side of (\ref{imple-thm}), $\underline{v}_{1}$,
represents the social surplus from a peaceful settlement, available
for division between the two players. The inequality acts as an incentive
budget constraint: the total ``demand'' for surplus must not exceed
this available amount. The first term, $\bar{v}_{2}c_{1,\bar{\sigma}_{2}}$,
is the payment needed to deter player 2's highest type from rejecting---analogous
to the corresponding condition in the mediation model of \citet{Zheng2019}.
The second term, $\tilde{\pi}_{1}^{*}(\underline{v}_{1})$, captures
a new strategic force unique to the bribing  protocol: the lowest
type of player 1 can ``gamble'' by offering an off-path bribe. This
gamble is functionally equivalent to bidding in an FPA against player
2: if the bribe is accepted, type $\underline{v}_{1}$ wins at the
cost of the bribe; if rejected, he earns zero in the continuation
all-pay auction. More specifically, recall from Lemma \ref{lem:c2sig-zero-p1-form}
that $\pi_{1}(\underline{v}_{1}|b,\delta_{\underline{v}_{1}})=F_{2}(a_{2,\underline{\sigma}}(b))(\underline{v}_{1}-b)$.
Thus, maximizing $\pi_{1}(\underline{v}_{1}|b,\delta_{\underline{v}_{1}})$
is equivalent to calculating a best response in an FPA in which player
2 uses the bidding function in \eqref{interpret-as-bidding-1st-price}
(with $a_{2}$ replaced by $v_{2}$). The value $\tilde{\pi}_{1}^{*}(\underline{v}_{1})$
is precisely the maximum expected payoff from this off-path gamble.
If the combined demands of the two players exceed the available surplus
$\underline{v}_{1}$, no peaceful pooling equilibrium can be sustained:
the weakest type of player 1 has too much to gain from breaking the
pool.

\section{The requesting model \protect}\label{sec:requesting}

In this section we consider the opposite scenario in which player
1 requests a payment from player 2 in exchange for exiting the conflict.
Just as in the bribing model, peace is not implementable in any separating
equilibrium. Below we focus on robust peaceful equilibria, which can
only be pooling.

Suppose that a peaceful equilibrium exists. Denote the on-path request
by $\bar{r}$, which cannot be higher than $\underline{v}_{2}$.

Consider an off-path request $r$. As in the bribing model, for a
given belief, if it is optimal for some type $v_{2}$ to reject $r$,
then it is also optimal for any type $v_{2}'<v_{2}$ to reject the
request.\footnote{For any given belief, the expected payoff of player 2 is increasing
at a rate lower than one in the BNE of the associated continuation
auction.} Consequently, if the rejection set is non-empty, then it takes the
form of an interval $[\underline{v}_{2},\alpha_{2,\sigma}(r)]$ for
some $\alpha_{2,\sigma}(r)\le\bar{v}_{2}$. Clearly, it is optimal
for all types $v_{2}<r$ to reject the request $r$ and thus $\alpha_{2,\sigma}(r)\ge\min\{\bar{v}_{2},r\}$.

For any given belief $\tilde{F}_{1}$ and any BNE $\sigma$ of the
induced continuation auction, the expected payoff of type $\bar{v}_{1}$
from an off-path request $r$ is 
\begin{equation}
\pi_{1}(\bar{v}_{1}|r,\tilde{F}_{1})=F_{2}(\alpha_{2,\sigma}(r))(\bar{v}_{1}-x_{\sigma})+(1-F_{2}(\alpha_{2,\sigma}(r)))r.\label{eq:pi-v1bar-r}
\end{equation}
If the consistent reply in $\sigma$ is partial rejection (i.e., $\alpha_{2,\sigma}(r)<\bar{v}_{2}$),
then type $\alpha_{2,\sigma}(r)$ must be indifferent between rejecting
(and paying) $r$. In that case, because type $\alpha_{2,\sigma}(r)$
is the highest type of the rejection set, she must bid $x_{\sigma}$
in any BNE $\sigma$ of the continuation auction and win with probability
one and thus $x_{\sigma}=r$. If the consistent reply is full rejection
(i.e., $\alpha_{2,\sigma}(r)=\bar{v}_{2}$), then $x_{\sigma}\le r$. 
\begin{lem}
\label{lem-request-non-exist-bound} If $\bar{v}_{1}>2\underline{v}_{2}$,
then there is no peaceful equilibrium; if $\bar{v}_{1}\le2\underline{v}_{2}$,
then in any peaceful equilibrium, $\bar{r}\ge\bar{v}_{1}/2$. 
\end{lem}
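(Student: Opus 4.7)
The plan is to exhibit a single off-path deviation--namely requesting $r^{*}=\bar{v}_{1}/2$--that guarantees type $\bar{v}_{1}$ a payoff of at least $\bar{v}_{1}/2$ regardless of which belief player 2 forms or which consistent reply is selected. Since the equilibrium payoff of every type of player 1 in a peaceful equilibrium is $\bar{r}$, and the peace requirement already forces $\bar{r}\leq\underline{v}_{2}$, this single inequality will pin down both conclusions of the lemma.

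The key computation uses (\ref{eq:pi-v1bar-r}) and distinguishes the possible forms of the consistent reply at $r^{*}$. If the reply is partial rejection ($\alpha_{2,\sigma}(r^{*})<\bar{v}_{2}$), then, as noted in the paragraph preceding the lemma, the highest rejecting type is indifferent and bids $x_{\sigma}$ for a sure win, so $x_{\sigma}=r^{*}=\bar{v}_{1}/2$; substituting into (\ref{eq:pi-v1bar-r}) gives
\[
\pi_{1}(\bar{v}_{1}|r^{*},\tilde{F}_{1})=F_{2}(\alpha_{2,\sigma}(r^{*}))\tfrac{\bar{v}_{1}}{2}+\bigl(1-F_{2}(\alpha_{2,\sigma}(r^{*}))\bigr)\tfrac{\bar{v}_{1}}{2}=\tfrac{\bar{v}_{1}}{2},
\]
and the rejection probability drops out. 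If the reply is full rejection, then $x_{\sigma}\leq r^{*}=\bar{v}_{1}/2$, so (\ref{eq:pi-v1bar-r}) gives $\pi_{1}=\bar{v}_{1}-x_{\sigma}\geq\bar{v}_{1}/2$. If no type of player 2 strictly rejects (which can only arise when $r^{*}\leq\underline{v}_{2}$), then $\pi_{1}=r^{*}=\bar{v}_{1}/2$. Hence $\pi_{1}(\bar{v}_{1}|r^{*},\tilde{F}_{1})\geq\bar{v}_{1}/2$ in every case.

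The conclusion is then immediate. If $r^{*}$ coincides with $\bar{r}$, the inequality $\bar{r}\geq\bar{v}_{1}/2$ holds trivially; otherwise $r^{*}$ is off-path and deterring the deviation by type $\bar{v}_{1}$ requires $\bar{r}\geq\pi_{1}(\bar{v}_{1}|r^{*},\tilde{F}_{1})\geq\bar{v}_{1}/2$. Combining with $\bar{r}\leq\underline{v}_{2}$: when $\bar{v}_{1}>2\underline{v}_{2}$ we would need $\underline{v}_{2}\geq\bar{r}\geq\bar{v}_{1}/2>\underline{v}_{2}$, a contradiction ruling out peaceful equilibria; when $\bar{v}_{1}\leq2\underline{v}_{2}$ the bound $\bar{r}\geq\bar{v}_{1}/2$ is the second claim. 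The main subtlety--and the only real obstacle--is verifying that the lower bound on $\pi_{1}$ is genuinely belief-independent: a priori, pessimistic off-path beliefs held by player 2 could discipline deviations, but the choice $r=\bar{v}_{1}/2$ is special because in the partial-rejection case the two payoff components $(\bar{v}_{1}-x_{\sigma})$ and $r$ both simplify to $\bar{v}_{1}/2$, stripping player 2's belief of all leverage.
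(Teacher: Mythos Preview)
Your proof is correct and follows essentially the same approach as the paper: both exploit that for a request $r$ near $\bar{v}_{1}/2$ one has $x_{\sigma}\le r$ for every belief and consistent reply, so that both components of (\ref{eq:pi-v1bar-r}) are at least $\bar{v}_{1}/2$, making the deviation profitable for type $\bar{v}_{1}$ whenever $\bar{r}<\bar{v}_{1}/2$. The only cosmetic difference is that the paper treats the two cases with slightly different deviating requests (an $r\in(\underline{v}_{2},\bar{v}_{1}/2)$ when $\bar{v}_{1}>2\underline{v}_{2}$, and $r=\bar{v}_{1}/2$ when $\bar{v}_{1}\le 2\underline{v}_{2}$), whereas you unify both under the single choice $r^{*}=\bar{v}_{1}/2$ and give a more explicit case split on the form of the consistent reply.
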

\begin{proof}
See Appendix \ref{proof-request-non-exist-bound}. 
\end{proof}
Let $\Psi_{2}(v_{2}|x):=F_{2}(v_{2})/F_{2}(x)$ be the probability
distribution of $v_{2}$ conditional on $v_{2}\le x$ and let $\Psi_{2}^{-1}(\cdot|x)$
be the inverse function. Clearly $\int_{0}^{1}(\Psi_{2}^{-1}(s|\bar{v}_{2}))^{-1}ds=\int_{0}^{1}(F_{2}^{-1}(s))^{-1}ds$. 
\begin{lem}
\label{lem:request-no-equil-v1bar-too-low}No robust peaceful equilibria
exist if $\bar{v}_{1}\le\underline{v}_{2}$. 
\end{lem}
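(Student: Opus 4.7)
The plan is to suppose, toward contradiction, that a robust peaceful equilibrium with on-path request $\bar{r}\in(0,\underline{v}_{2}]$ exists, and to derive incompatible bounds on $\bar{r}$: D1 robustness forces $\bar{r}\ge\bar{v}_{1}$, while the on-path acceptance constraint of the lowest type of player 2 forces $\bar{r}<\bar{v}_{1}$ strictly.

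For the robustness bound, I would first invoke D1 on off-path requests. For any response profile $\phi$ of player 2 and any off-path $r$, type $v_{1}$'s deviation payoff $p_{\phi}r+(1-p_{\phi})U_{1}(v_{1}|\sigma_{\phi})$ is non-decreasing in $v_{1}$ because continuation payoffs are (Lemma \ref{general-all-pay}), so the ``better off'' sets are nested in $v_{1}$ and D1 pins the off-path belief to $\delta_{\bar{v}_{1}}$. Given this belief and any $r<\bar{v}_{1}$, I would rule out partial and full rejection as consistent replies. A partial rejection with rejection set $[\underline{v}_{2},\alpha]$ would force $x_{\sigma}=r$ by boundary-type indifference; Lemma \ref{lem:one-sided-complete-info} then yields $c_{2,\sigma}=1-r/\bar{v}_{1}>0$, $c_{1,\sigma}=0$, and the $H_{1}$-boundary condition $1=\bar{v}_{1}\int_{c_{2,\sigma}}^{1}(\tilde{F}_{2}^{-1}(u))^{-1}du$ combined with $\tilde{F}_{2}^{-1}\ge\underline{v}_{2}$ gives $1\le r/\underline{v}_{2}$, contradicting $r<\bar{v}_{1}\le\underline{v}_{2}$. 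Full rejection would require $x_{\sigma}\le r$, but applying Lemma \ref{lem:one-sided-complete-info} to $\mathcal{G}(\delta_{\bar{v}_{1}},F_{2})$---noting that atomless $F_{2}$ together with $\bar{v}_{1}\le\underline{v}_{2}$ gives $\bar{v}_{1}\int_{0}^{1}(F_{2}^{-1}(s))^{-1}ds<1$---yields $c_{2,\sigma}=0$ and $x_{\sigma}=\bar{v}_{1}>r$. Hence full acceptance is the unique consistent reply, with player 1 deviation payoff $r$; so any $\bar{r}<\bar{v}_{1}$ would admit a profitable deviation $r\in(\bar{r},\bar{v}_{1})$, and $\bar{r}\ge\bar{v}_{1}$.

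For the acceptance bound, on-path acceptance of type $\underline{v}_{2}$ requires $\underline{v}_{2}-\bar{r}\ge U_{2}(\underline{v}_{2}|\sigma_{2})\ge c_{1,\sigma_{2}}\underline{v}_{2}$ (the last via bidding $0^{+}$), where $\sigma_{2}$ is the BNE of $\mathcal{G}(F_{1},\tilde{F}_{2})$ for the specified off-path belief $\tilde{F}_{2}$. I would then establish $c_{1,\sigma_{2}}>1-\bar{v}_{1}/\underline{v}_{2}$ strictly for every admissible $\tilde{F}_{2}$ with support in $[\underline{v}_{2},\bar{v}_{2}]$. The $H_{1}$-boundary condition in Lemma \ref{general-all-pay} yields $1-c_{1,\sigma_{2}}=\int_{0}^{x_{\sigma}}(\tilde{F}_{2}^{-1}(H_{2}(y)))^{-1}dy\le x_{\sigma}/\underline{v}_{2}\le\bar{v}_{1}/\underline{v}_{2}$. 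For non-degenerate $\tilde{F}_{2}$, atomlessness implies $\tilde{F}_{2}^{-1}(u)>\underline{v}_{2}$ on a set of positive measure, making the first inequality strict; for the degenerate case $\tilde{F}_{2}=\delta_{\underline{v}_{2}}$, Lemma \ref{lem:one-sided-complete-info} reduces the condition to $\underline{v}_{2}\int_{c_{1,\sigma_{2}}}^{1}(F_{1}^{-1}(s))^{-1}ds=1$, and atomless $F_{1}$ (so $F_{1}^{-1}(s)<\bar{v}_{1}$ on $[c_{1,\sigma_{2}},1)$) gives $(1-c_{1,\sigma_{2}})/\bar{v}_{1}<1/\underline{v}_{2}$. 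Either way $\bar{r}\le\underline{v}_{2}(1-c_{1,\sigma_{2}})<\bar{v}_{1}$, contradicting the robustness bound. The main obstacle is establishing this strict inequality uniformly across all admissible $\tilde{F}_{2}$; this requires the two-case split and relying on the atomlessness of $\tilde{F}_{2}$ or $F_{1}$, respectively.
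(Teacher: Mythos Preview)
Your argument is correct and follows the same two-step skeleton as the paper: D1 pins the off-path belief at $\delta_{\bar{v}_{1}}$, any off-path $r<\bar{v}_{1}$ is then fully accepted (so $\bar{r}\ge\bar{v}_{1}$), and this is shown to be incompatible with on-path acceptance by player~2. The contradiction step, however, is executed differently. The paper observes directly that in any BNE $\sigma_{2}$ of $\mathcal{G}(F_{1},\tilde{F}_{2})$ one has $x_{\sigma_{2}}<\bar{v}_{1}$ strictly---because $\bar{v}_{1}>\underline{v}_{1}$ forces $U_{1}(\bar{v}_{1}\mid\sigma_{2})=\bar{v}_{1}-x_{\sigma_{2}}>0$---and then notes that the \emph{top} type $v_{2}^{m}:=\sup\,\mathrm{supp}\,\tilde{F}_{2}$ earns $v_{2}^{m}-x_{\sigma_{2}}>v_{2}^{m}-\bar{v}_{1}\ge v_{2}^{m}-\bar{r}$ by rejecting, so no belief supports peace. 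Your route instead bounds $c_{1,\sigma_{2}}$ via the $H_{1}$-boundary integral and an atomlessness case split (generic $\tilde{F}_{2}$ versus $\delta_{\underline{v}_{2}}$), and then invokes the \emph{bottom} type $\underline{v}_{2}$'s acceptance constraint to force $\bar{r}<\bar{v}_{1}$. Both reach the same strict inequality, but the paper's argument is shorter: the single observation $x_{\sigma_{2}}<\bar{v}_{1}$ handles all beliefs uniformly without any case split, and targeting the top type of $\tilde{F}_{2}$ (who bids $x_{\sigma_{2}}$ and wins for sure) makes the comparison immediate. Your first step also differs in detail---you treat partial and full rejection separately via explicit boundary computations, whereas the paper handles both at once by showing $c_{2,\bar{\sigma}}=0$ (hence $x_{\bar{\sigma}}=\bar{v}_{1}$) and arguing the highest rejecting type would strictly prefer acceptance.
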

\begin{proof}
See Appendix \ref{proof-request-no-equil-v1bar-too-low}. 
\end{proof}
It follows from Lemmas \ref{lem-request-non-exist-bound} and \ref{lem:request-no-equil-v1bar-too-low}
that a robust peaceful equilibrium requires $\underline{v}_{2}<\bar{v}_{1}\le2\underline{v}_{2}$
and consequently $\underline{v}_{2}>0$. We assume these conditions
hold for the remainder of the analysis.

Consider $\mathcal{G}(\delta_{\bar{v}_{1}},F_{2})$ and denote a BNE
by $\bar{\sigma}^{*}$. Let $c_{2,\bar{\sigma}^{*}}$ be the unique
solution to 
\[
\bar{v}_{1}\int_{c_{2,\bar{\sigma}^{*}}}^{1}\left(F_{2}^{-1}\left(s\right)\right)^{-1}ds=1
\]
if $\bar{v}_{1}\int_{0}^{1}(F_{2}^{-1}(s))^{-1}ds>1$ and zero otherwise.
Let $x_{\bar{\sigma}^{*}}$ denote the highest bid in $\bar{\sigma}^{*}$.
By Lemma \ref{lem:one-sided-complete-info}, 
\[
x_{\bar{\sigma}^{*}}=\bar{v}_{1}(1-c_{2,\bar{\sigma}^{*}}).
\]

Let $\alpha_{2,\bar{\sigma}}(r)$ be the unique solution to
\begin{equation}
\bar{v}_{1}\int_{c_{2,\bar{\sigma}}}^{1}\left(\Psi_{2}^{-1}\left(s|\alpha_{2,\bar{\sigma}}(r)\right)\right)^{-1}ds=1\label{eq:request-a2sig-below-1}
\end{equation}
where $c_{2,\bar{\sigma}}$ is given by $r=\bar{v}_{1}(1-c_{2,\bar{\sigma}})$,
provided that a solution $\alpha_{2,\bar{\sigma}}(r)\in[\underline{v}_{2},\bar{v}_{2}]$
exists for \eqref{eq:request-a2sig-below-1}.\footnote{When $\alpha_{2,\bar{\sigma}}(r)$ decreases, $\Psi_{2}(\cdot|\alpha_{2,\bar{\sigma}}(r))$
becomes more FOSD. Stochastic dominance implies that the integral
$\int_{c_{2,\bar{\sigma}}}^{1}\left(\Psi_{2}^{-1}(s|\alpha_{2,\bar{\sigma}}(r))\right)^{-1}ds$
is increasing in $\alpha_{2,\bar{\sigma}}(r)$. Thus if for a given
$c_{2,\bar{\sigma}}$, (\ref{eq:request-a2sig-below-1}) admits an
$\alpha_{2,\bar{\sigma}}(r)\in[\underline{v}_{2},\bar{v}_{2}]$, then
it is unique.} As shown in the proof of Theorem~\ref{thm-requesting}, for $r<x_{\bar{\sigma}^{*}}$,
$\alpha_{2,\bar{\sigma}}(r)$ is well-defined and unique; at the boundary
$r=x_{\bar{\sigma}^{*}}$ equation (\ref{eq:request-a2sig-below-1})
yields $\alpha_{2,\bar{\sigma}}(x_{\bar{\sigma}^{*}})=\bar{v}_{2}$
(full rejection). Thus, the objective function in (\ref{eq:p1-v1bar-r-delta-v1bar})
is well-defined and continuous on the entire interval $[\underline{v}_{2},x_{\bar{\sigma}^{*}}]$.
Let 
\begin{equation}
r^{*}\in\arg\max_{r\in[\underline{v}_{2},x_{\bar{\sigma}^{*}}]}F_{2}(\alpha_{2,\bar{\sigma}}(r))(\bar{v}_{1}-r)+(1-F_{2}(\alpha_{2,\bar{\sigma}}(r)))r.\label{eq:p1-v1bar-r-delta-v1bar}
\end{equation}

We now characterize the necessary and sufficient conditions for the
existence of a robust peaceful equilibrium. 
\begin{thm}
\label{thm-requesting}In the requesting model, there exists a robust
peaceful equilibrium if and only if the following conditions are satisfied: 
\begin{enumerate}[label=(\roman{enumi}).]
\item $0<\underline{v}_{2}<\bar{v}_{1}\le2\underline{v}_{2};$ 
\item $c_{1,\underline{\sigma}_{2}}=0;$ 
\item $\underline{v}_{2}\ge F_{2}(\alpha_{2,\bar{\sigma}}(r^{*}))(\bar{v}_{1}-r^{*})+(1-F_{2}(\alpha_{2,\bar{\sigma}}(r^{*})))r^{*}.$ 
\end{enumerate}
In any robust peaceful equilibrium, $\bar{r}=\underline{v}_{2}$. 
\end{thm}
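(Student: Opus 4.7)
The plan is to prove the theorem in three stages: first establish that $\bar{r}=\underline{v}_{2}$ is forced in any robust peaceful equilibrium; then extract (a), (b), (c) as necessary conditions given this form; and finally construct the equilibrium under (a)--(c) for sufficiency.

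Uniqueness of $\bar{r}=\underline{v}_{2}$ is the backbone. Peacefulness itself yields $\bar{r}\le\underline{v}_{2}$. I then suppose $\bar{r}<\underline{v}_{2}$ and derive a contradiction. The D1 criterion selects $\delta_{\bar{v}_{1}}$ as the off-path belief for any unsent request, because the highest type of player 1 has the strongest incentive to deviate to any request that might be accepted (her reservation payoff from the continuation war is largest). Against $\delta_{\bar{v}_{1}}$, I apply Lemma \ref{lem:one-sided-complete-info} to the continuation auction $\mathcal{G}(\delta_{\bar{v}_{1}},\tilde{F}_{2})$ to show that any off-path $r\in(\bar{r},\underline{v}_{2})$ must be fully accepted. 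Indeed, if some marginal rejecting type $\alpha\in[\underline{v}_{2},\bar{v}_{2}]$ were indifferent between accepting $r$ and continuing, the one-sided formulas would yield $x_{\sigma}=r$ and $U_{2}(v_{2}|\sigma)=v_{2}c_{1,\sigma}$ with $c_{1,\sigma}=1-r/\alpha$, but then acceptance $v_{2}-r$ strictly exceeds $v_{2}c_{1,\sigma}$ for every $v_{2}\ge\alpha>r$, contradicting indifference. Hence such $r$ is fully accepted and delivers player 1 the payoff $r>\bar{r}$, a contradiction forcing $\bar{r}=\underline{v}_{2}$.

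For necessity given $\bar{r}=\underline{v}_{2}$: condition (a) is immediate from Lemmas \ref{lem-request-non-exist-bound} and \ref{lem:request-no-equil-v1bar-too-low}. For (b), type $\underline{v}_{2}$ obtains $0$ from accepting and $c_{1,\sigma_{2}}\underline{v}_{2}$ from rejecting in whichever continuation BNE is induced by player 1's off-path belief about player 2; implementability requires this to be non-positive for some belief. Using (\ref{eq:boundary-cond-one-sided}) in Lemma \ref{lem:one-sided-complete-info}, the formula shows $c_{1,\sigma^{*}}$ is monotonically increasing in $v_{2}^{*}$ across degenerate beliefs $\delta_{v_{2}^{*}}$, and a stochastic-dominance comparison extends this to all beliefs $\tilde{F}_{2}$ on $[\underline{v}_{2},\bar{v}_{2}]$: a stronger belief pushes player 1 against tougher bids and raises her atom at zero. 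Therefore $\delta_{\underline{v}_{2}}$ globally minimizes $c_{1,\sigma_{2}}$, and (b) $\underline{v}_{2}c_{1,\underline{\sigma}_{2}}=0$ is necessary; no higher type of player 2 wishes to reject by the standard envelope comparison (rejection slope in $v_{2}$ is $\le 1$ while the acceptance slope is exactly $1$, both starting at $0$). For (c), type $\bar{v}_{1}$'s equilibrium payoff is $\underline{v}_{2}$, and her D1-induced deviation payoff at off-path $r$ is (\ref{eq:pi-v1bar-r}). Deviations with $r<\underline{v}_{2}$ yield payoff $r<\underline{v}_{2}$ by the uniqueness step, while $r>x_{\bar{\sigma}^{*}}$ gives $\bar{v}_{1}-x_{\bar{\sigma}^{*}}$, already captured at the right endpoint of the optimization range in (\ref{eq:p1-v1bar-r-delta-v1bar}). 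Continuity of (\ref{eq:pi-v1bar-r}) as $r\downarrow\underline{v}_{2}$ places the supremum at least at $\underline{v}_{2}$, so non-profitability against $\underline{v}_{2}$ forces the equality in (c).

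For sufficiency under (a)--(c), I construct the equilibrium with $\bar{r}=\underline{v}_{2}$, player 2's off-path belief $\delta_{\bar{v}_{1}}$ (D1-consistent), and player 1's off-path belief about player 2 at $\delta_{\underline{v}_{2}}$. Condition (b) blocks profitable rejection of the on-path request, condition (c) blocks any profitable off-path request for $\bar{v}_{1}$, and the rate-one envelope analogue of Lemma \ref{slope-lower-than-1} propagates non-deviation from $\bar{v}_{1}$ to every lower type of player 1. Condition (a) guarantees $x_{\bar{\sigma}^{*}}$ is well-defined and that the continuation content is nontrivial. The most delicate step will be the uniqueness claim: specifically, ruling out every partial-rejection consistent reply against off-path $r<\underline{v}_{2}$ under the D1 belief $\delta_{\bar{v}_{1}}$, together with the accompanying monotonicity claim that $c_{1,\sigma_{2}}$ is globally minimized at $\delta_{\underline{v}_{2}}$. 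The former relies on a subtle inequality between $v_{2}-r$ and $v_{2}c_{1,\sigma}$ in one-sided auctions; the latter requires a stochastic-dominance argument on BNE atoms extending beyond the degenerate-belief case treated explicitly in the preliminaries.
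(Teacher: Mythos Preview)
Your overall architecture---pin down $\bar r=\underline v_2$, then read off (a)--(c), then construct---matches the paper. The substantive gap is in your argument that any off-path $r\in(\bar r,\underline v_2)$ is fully accepted under the D1 belief $\delta_{\bar v_1}$. The formulas you invoke are not correct in this setting. In the continuation auction $\mathcal G(\delta_{\bar v_1},\Psi_2(\cdot\mid\alpha))$ player~1 is the degenerate side, so Lemma~\ref{lem:one-sided-complete-info} gives $x_\sigma=\bar v_1(1-c_{2,\sigma})$, not a relation of the form $c_{1,\sigma}=1-r/\alpha$. Moreover $U_2(v_2\mid\sigma)=v_2 c_{1,\sigma}$ holds only for the lowest rejecting type $\underline v_2$, not for arbitrary $v_2$; the comparison you make for $v_2\ge\alpha$ pits acceptance $v_2-r$ against the payoff from bidding zero, which is neither the rejection payoff nor relevant for types that already accept, so no contradiction to indifference at $\alpha$ emerges. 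In particular, your argument never uses $\bar v_1>\underline v_2$, which is the crux.

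The paper closes this step by a clean case split on $c_{1,\bar\sigma}$. If $c_{1,\bar\sigma}>0$ then type $\bar v_1$ earns zero, hence $x_{\bar\sigma}=\bar v_1$; since $\bar v_1>\underline v_2>r$ (condition~(a), already secured by Lemmas~\ref{lem-request-non-exist-bound}--\ref{lem:request-no-equil-v1bar-too-low}), the marginal type $\alpha$ would get $\alpha-\bar v_1<\alpha-r$ from rejection, contradicting her choice. If $c_{1,\bar\sigma}=0$ then type $\underline v_2$ earns zero from rejection but $\underline v_2-r>0$ from acceptance, again a contradiction. Either way the rejection set collapses. You should replace your computation with this dichotomy. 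Two smaller points: for $r>x_{\bar\sigma^*}$ you assert full rejection without argument---the paper derives it by showing the boundary condition would otherwise force $c_{1,\bar\sigma}<0$, and then uses $x_{\bar\sigma^*}\ge\underline v_2$ together with $\bar v_1\le 2\underline v_2$ to bound the payoff by $\underline v_2$; and for (b), your detour through global minimization of $c_{1,\sigma_2}$ over all $\tilde F_2$ is unnecessary---D1 already selects $\delta_{\underline v_2}$ directly, after which (b) is immediate.
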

\begin{proof}
See Appendix \ref{sec:Proof-request-lmm-bidder-1}. 
\end{proof}
The intuition for $\bar{r}=\underline{v}_{2}$ is as follows. By the
D1 criterion, player 1's off-path belief is $\delta_{\bar{v}_{1}}$.
Since $\bar{v}_{1}>\underline{v}_{2}$, any request $r<\underline{v}_{2}$
induces full acceptance by player 2. If some type rejected, the lowest
rejecting type $\underline{v}_{2}$ would earn zero in the continuation
auction (because $\bar{v}_{1}>\underline{v}_{2}$), contradicting
the incentive to reject since acceptance yields $r>0$. Thus, any
$r<\underline{v}_{2}$ is fully accepted, yielding player 1 a payoff
of $r$. This makes deviation to $r\in(\bar{r},\underline{v}_{2})$
profitable if $\bar{r}<\underline{v}_{2}$. Since the equilibrium
request cannot exceed $\underline{v}_{2}$, it must be exactly $\underline{v}_{2}$.

We now turn to peace security. 

\begin{lem}
\label{lem:alpha2-sigma-ranking}For any off-path request $r$, for
any possible beliefs and any BNE $\sigma$ of the induced continuation
auction, 
\begin{equation}
\alpha_{2,\underline{\sigma}}\ge\alpha_{2,\sigma}\ge\alpha_{2,\bar{\sigma}}\quad\mathrm{and}\quad x_{\underline{\sigma}}\le x_{\sigma}\le x_{\bar{\sigma}}.\label{eq:request-security-a2-xsigma}
\end{equation}
\end{lem}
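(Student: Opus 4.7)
The plan is to reduce both rankings to a single monotone comparative-statics property of the continuation all-pay auction: the highest bid $x_\sigma$ in any BNE of $\mathcal{G}(\tilde{F}_1,\tilde{F}_2)$ is weakly increasing in each of $\tilde{F}_1$ and $\tilde{F}_2$ in the FOSD order. I would first establish this property by working directly with the coupled integral system in Lemma \ref{general-all-pay}: a FOSD increase in $\tilde{F}_1$ raises $\tilde{F}_1^{-1}$ pointwise, which slows the accumulation of $H_{2,\sigma}$, so $H_{2,\sigma}$ can only reach $1$ at a weakly larger bid, yielding a larger $x_\sigma$; the argument for $\tilde{F}_2$ is symmetric. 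This mirrors the comparative statics underlying Lemma \ref{a2-lowest-highest} in the bribing model.

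Next, I would translate this fact into the $\alpha$-ranking. Define, for each belief $\tilde{F}_1$, $\chi(\alpha;\tilde{F}_1):=x_\sigma$ in $\mathcal{G}(\tilde{F}_1,\Psi_2(\cdot\mid\alpha))$. Since $\Psi_2(\cdot\mid\alpha')$ FOSD-dominates $\Psi_2(\cdot\mid\alpha)$ whenever $\alpha'>\alpha$, the seed monotonicity immediately gives that $\chi$ is weakly increasing in both arguments. Under partial rejection, the indifference condition of type $\alpha_{2,\sigma}(r)$ recalled just before the lemma pins $\chi(\alpha_{2,\sigma}(r);\tilde{F}_1)=r$, so as $\tilde{F}_1$ becomes FOSD-stronger the solution $\alpha_{2,\sigma}(r)$ can only decrease; at the FOSD-extremes $\delta_{\underline{v}_1}$ and $\delta_{\bar{v}_1}$ this delivers $\alpha_{2,\underline{\sigma}}\ge\alpha_{2,\sigma}\ge\alpha_{2,\bar{\sigma}}$. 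Full rejection arises exactly when $\chi(\bar{v}_2;\tilde{F}_1)<r$, a condition which only weaker beliefs can satisfy, so the ranking survives the partial-to-full transition.

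For the $x_\sigma$-ranking I would split by whether each of $\underline{\sigma},\sigma,\bar{\sigma}$ is partial or full. Partial gives $x_\sigma=r$, full gives $x_\sigma=\chi(\bar{v}_2;\tilde{F}_1)\le r$; the monotonicity of $\chi(\bar{v}_2;\cdot)$ rules out the inconsistent combination (partial for $\underline{\sigma}$ but full for $\bar{\sigma}$) and makes $x_{\underline{\sigma}}\le x_\sigma\le x_{\bar{\sigma}}$ a routine check in each remaining case.

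The main obstacle is the first step: rigorously establishing that $x_\sigma$ is FOSD-monotone in $\tilde{F}_1$. Lemma \ref{general-all-pay}'s integral equations couple $H_1$ and $H_2$, so no direct substitution works. I would prove it by contradiction, following the template of Lemma \ref{a2-lowest-highest}: assume $x_{\sigma'}<x_\sigma$ under a stronger belief $\tilde{F}_1'$, then exploit the pointwise comparison of $\tilde{F}_1^{-1}$ and $(\tilde{F}_1')^{-1}$ in the integral equations to derive a simultaneous violation of the boundary conditions $H_{i,\sigma'}(x_{\sigma'})=1$ for $i=1,2$.
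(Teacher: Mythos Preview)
Your approach and the paper's share the same engine---the coupled integral system from Lemma~\ref{general-all-pay} and a contradiction argument in the spirit of Lemma~\ref{a2-lowest-highest}---but you organize it differently and, in doing so, introduce an unnecessary obstacle. You propose to first establish a \emph{general} FOSD-monotonicity of $x_\sigma$ in $\tilde{F}_1$, then read off both rankings via the implicit function $\chi(\alpha;\tilde{F}_1)=r$. The paper instead runs the contradiction directly on the $\alpha$-ranking: it supposes $\alpha_{2,\underline{\sigma}}<\alpha_{2,\sigma}$, uses the indifference condition (partial rejection forces $x_\sigma=r$, full rejection gives $x_\sigma\le r$) to deduce $x_\sigma\le x_{\underline{\sigma}}$ under that hypothesis, and only then enters the integral-equation comparison. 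The $x$-ranking falls out at the end as a corollary of the $\alpha$-ranking, not as a separately established seed.

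The concrete risk in your plan is Step~1. The template of Lemma~\ref{a2-lowest-highest} does \emph{not} prove FOSD-monotonicity of $x_\sigma$ for an arbitrary FOSD pair $(\tilde{F}_1,\tilde{F}_1')$; it works precisely because one side is degenerate at $\underline{v}_1$, so that $F_{1,\underline{\sigma}}^{-1}(H)\equiv\underline{v}_1\le F_{1,\sigma}^{-1}(\tilde{H})$ for \emph{all} $H,\tilde{H}$, which decouples the comparison of $H_{2,\underline{\sigma}}'$ and $H_{2,\sigma}'$ from the unknown $H_1$'s. For two non-degenerate beliefs you only have $(\tilde{F}_1')^{-1}(s)\ge\tilde{F}_1^{-1}(s)$ at the \emph{same} $s$, so comparing $H_{2,\sigma'}'(\beta)=1/(\tilde{F}_1')^{-1}(H_{1,\sigma'}(\beta))$ with $H_{2,\sigma}'(\beta)=1/\tilde{F}_1^{-1}(H_{1,\sigma}(\beta))$ requires simultaneously controlling $H_{1,\sigma'}$ versus $H_{1,\sigma}$---exactly the coupling you flagged, and the template does not resolve it. Fortunately the lemma only asks for comparisons with the two degenerate extremes $\delta_{\underline{v}_1}$ and $\delta_{\bar{v}_1}$, so your scheme can be rescued by weakening Step~1 to those cases; but then you are essentially reproducing the paper's argument, only with the $x$-step and the $\alpha$-step inverted. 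The paper's ordering is slightly slicker because the indifference condition hands you $x_\sigma\le x_{\underline{\sigma}}$ for free once you assume the wrong $\alpha$-ordering, so the heavy integral-equation work is done just once.
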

\begin{proof}
See Appendix \ref{Proof-alpha2-sigma-ranking}. 
\end{proof}
\begin{rem}
\label{rem:request-full-rej-acc}The inequalities in (\ref{eq:request-security-a2-xsigma})
imply that for any given off-path $r$,
\begin{enumerate}[label=(\roman{enumi}).]
\item If belief $\delta_{\bar{v}_{1}}$ induces full rejection as the consistent
reply, all beliefs do. 
\item If some belief induces full rejection as the consistent reply, so
does belief $\delta_{\underline{v}_{1}}$. 
\item If belief $\delta_{\underline{v}_{1}}$ induces full acceptance as
the consistent reply, all beliefs do. 
\item If some belief induces full acceptance as the consistent reply, so
does belief $\delta_{\bar{v}_{1}}$. 
\end{enumerate}
\end{rem}
We now characterize the necessary and sufficient conditions for peace
security. 
\begin{thm}
\label{thm:request-security}Suppose that there exists a robust peaceful
equilibrium in the requesting model. Then peace is securable if and
only if 
\begin{equation}
c_{1,\bar{\sigma}_{2}}=0,\text{\; and \;}\bar{v}_{1}-\underline{v}_{1}\le\underline{v}_{2}.\label{eq:request-security-player-2}
\end{equation}
\end{thm}
\begin{proof}
See Appendix \ref{subsec:Proof-of-request-security}. 
\end{proof}
Security requires that it is not profitable for the lowest type of
player 2 to reject the peaceful request when player 1 holds the the
most pessimistic belief $v_{2}=\bar{v}_{2}$. This requirement is
ensured by $c_{1,\bar{\sigma}_{2}}=0$. For condition $\bar{v}_{1}-\underline{v}_{1}\le\underline{v}_{2}$,
first consider the case $\underline{v}_{1}\leq\underline{v}_{2}$.
Player 1 can always deviate to a sufficiently high request that triggers
full rejection by all types of player 2 under any belief. Thus, if
player 2 holds the most optimistic belief $\delta_{\underline{v}_{1}}$,
the continuation auction is $\mathcal{G}(\delta_{\underline{v}_{1}},F_{2})$
and the highest bid equals $\underline{v}_{1}$. Type $\bar{v}_{1}$
can then bid $\underline{v}_{1}$ and earn $\bar{v}_{1}-\underline{v}_{1}$.
For peace to be securable, this deviation payoff must not exceed the
equilibrium payoff $\underline{v}_{2}$, yielding the condition $\bar{v}_{1}-\underline{v}_{1}\leq\underline{v}_{2}$.
By contrast, if $\underline{v}_{1}>\underline{v}_{2}$, then $\bar{v}_{1}-\underline{v}_{1}\le\underline{v}_{2}$
is satisfied automatically by the fact that $\bar{v}_{1}\leq2\underline{v}_{2}$.

\section{Conclusion \protect}\label{sec:Conclusion}

We analyze strategic bargaining for peaceful settlement before a conflict
escalates into an all-pay auction, comparing two distinct protocols:
offering a side payment and requesting one. By adapting the notions
of peace prospects from \citet{Zheng2019}, we explore how the ability
of a proposer to signal private information influences the implementability
and security of peace.

Our first main finding is that while peace is implementable through
a continuum of bribes, it cannot be secured, in contrast to the mediation
result in \citet{Zheng2019}. This impossibility stems from the proposer's
incentive to deviate to marginally lower bribes to exploit the receiver's
beliefs, a strategic dynamic that differs fundamentally from the mediation
framework of \citet{Zheng2019}. The proposer's tactic is no longer
to pretend weak and trick the opponent in the continuation auction
(as in the mediation model), but rather to pretend strong and force
the opponent to accept a lower bribe. 

Our second main finding is that in the requesting model, the possibility
of peace security can be restored under suitable conditions, but it
comes at the cost of sensitivity to minor payment perturbations. Unlike
the bribing model, where a continuum of bribes can sustain peace,
any robust peaceful equilibrium in the requesting model admits a single,
specific request, the lowest valuation of the paying player. 

Beyond these primary results, we show that the preservation of peace
implementability when players become stronger, originally identified
by \citet{Zheng2019}, holds in the bribing model but with a crucial
refinement. While FOSD suffices in the case of player 1, hazard-rate
dominance guarantees preservation when player 2 becomes stronger,
as it accounts for the effects of truncation on off-path gambling
payoffs. 

Finally, we also examine non-peaceful equilibria in the bribing model
and find that regular separating equilibria can be ruled out. This
result echoes \citet{rachmilevitch2013} but contrast with \citet{ES2004}.
We also show that in any regular non-peaceful equilibrium with finitely
many pooling regions, the number of on-path side payments is at most
two, and if it is two, then the higher one is accepted with certainty
and the lower one is rejected with a positive probability.

\newpage

\appendix
%dummy comment inserted by tex2lyx to ensure that this paragraph is not empty%dummy comment inserted by tex2lyx to ensure that this paragraph is not empty%dummy comment inserted by tex2lyx to ensure that this paragraph is not empty

\section*{Appendix}

\section{Proofs for the bribing model}

\subsection{Proof of Theorem \ref{thm:security-impossible} \protect}\label{Proof-security-impossible}

For this proof, for convenience, let $b^{s}$ or $b'$ denote a candidate
equilibrium bribe.

\emph{Step 1: Uniqueness of the securing bribe.} Suppose two bribes
$b'>b^{s}$ can secure peace, with equilibria $P'$ and $\bar{P}$
respectively. By security, in $\bar{P}$, no type of player 2 rejects
$b^{s}$ under any belief; thus $U_{2}(v_{2}|\sigma_{2})\le b^{s}$
for all $v_{2}$ and any continuation equilibrium $\sigma_{2}$. Consider
an off-path deviation to $b\in(b^{s},b')$ in $P'$. Since $U_{2}(v_{2}|\sigma_{2})\le b^{s}<b$,
player 2 accepts $b$ with probability one against belief $F_{1}$.
This constitutes a profitable deviation for player 1, contradicting
the security of $b'$. Thus, only one bribe can secure peace.

\emph{Step 2: Identification of the candidate bribe.} The unique securing
bribe must be $b^{s}=U_{2}(\bar{v}_{2}|\underline{\sigma}_{2})$.
Suppose instead $b'>U_{2}(\bar{v}_{2}|\underline{\sigma}_{2})$. For
any off-path $b\in(U_{2}(\bar{v}_{2}|\underline{\sigma}_{2}),b')$,
player 2's maximum possible continuation payoff is bounded by $U_{2}(\bar{v}_{2}|\underline{\sigma}_{2})$
(as in \citet{Zheng2019}). Since $b$ exceeds this bound, player
2 accepts $b$ with probability one under any belief. This allows
player 1 to profitably deviate to $b$, contradicting security. Hence,
$b^{s}=U_{2}(\bar{v}_{2}|\underline{\sigma}_{2})$.

\emph{Step 3: Constraint on type supports.} We show that security
requires $\bar{v}_{1}\le\underline{v}_{2}$. Observe that $b^{s}=U_{2}(\bar{v}_{2}|\underline{\sigma}_{2})>0$,
so off-path bribes $b<b^{s}$ always exist. 
\begin{claim}
\label{claim-payoff-v2bar} If an off-path bribe $b\in(0,b^{s})$
is rejected under belief $\delta_{\bar{v}_{1}}$, then in the continuation
auction $\mathcal{G}(\delta_{\bar{v}_{1}},\tilde{F}_{2})$, any type
$v_{2}$ earns expected payoff at most $\bar{v}_{2}-\bar{v}_{1}$.
\end{claim}
\begin{proof}
If type $v_{2}$ rejects, the lowest type in the rejection set earns
positive payoff (since $b>0$), implying $c_{1,\bar{\sigma}}>0$ in
the BNE $\bar{\sigma}$. Consequently, type $\bar{v}_{1}$ in $\mathcal{G}(\delta_{\bar{v}_{1}},\tilde{F}_{2})$
earns zero payoff and bids up to $x_{\bar{\sigma}}=\bar{v}_{1}$.
Type $\bar{v}_{2}$ must bid $\bar{v}_{1}$ and earns exactly $\bar{v}_{2}-\bar{v}_{1}$.
Since payoffs are increasing in type, all $v_{2}$ earn at most $\bar{v}_{2}-\bar{v}_{1}$.
\end{proof}
Security requires that for any off-path $b<b^{s}$, there is a positive
probability of rejection under any belief. In particular, for belief
$\delta_{\bar{v}_{1}}$, type $\bar{v}_{2}$ must prefer rejection
to acceptance if the bribe is low enough. However, by Claim \ref{claim-payoff-v2bar},
rejecting yields at most $\bar{v}_{2}-\bar{v}_{1}$. If $\bar{v}_{1}>\underline{v}_{2}$,
then $\bar{v}_{2}-\bar{v}_{1}<\bar{v}_{2}-\underline{v}_{2}\le b^{s}$
(from Eq. \ref{secur-cond-bidder2}). Player 1 could then deviate
to $b\in(\max\{\bar{v}_{2}-\bar{v}_{1},0\},b^{s})$. Under belief
$\delta_{\bar{v}_{1}}$, type $\bar{v}_{2}$ would accept $b$ (since
$b>\bar{v}_{2}-\bar{v}_{1}$), violating the security requirement
that rejection must occur with positive probability. Thus, $\bar{v}_{1}\le\underline{v}_{2}$.

\emph{Step 4: Contradiction via highest-type rent.} Consider the auction
$\mathcal{G}(F_{1},\delta_{\underline{v}_{2}})$ with BNE $\underline{\sigma}_{2}$.
Since $\bar{v}_{1}\le\underline{v}_{2}$, the highest bid $x_{\underline{\sigma}_{2}}\le\bar{v}_{1}$.
Thus, $U_{2}(\bar{v}_{2}|\underline{\sigma}_{2})=\bar{v}_{2}-x_{\underline{\sigma}_{2}}\ge\bar{v}_{2}-\bar{v}_{1}$.
Since $b^{s}=U_{2}(\bar{v}_{2}|\underline{\sigma}_{2})$, we have
$b^{s}\ge\bar{v}_{2}-\bar{v}_{1}$.

Suppose strict inequality holds: $b^{s}>\bar{v}_{2}-\bar{v}_{1}$.
Player 1 could deviate to $b\in(\bar{v}_{2}-\bar{v}_{1},b^{s})$.
By Claim \ref{claim-payoff-v2bar}, under belief $\delta_{\bar{v}_{1}}$,
player 2 accepts $b$ with probability one. This is a profitable deviation,
contradicting security. Hence, $b^{s}=\bar{v}_{2}-\bar{v}_{1}$.

This implies $U_{2}(\bar{v}_{2}|\underline{\sigma}_{2})=\bar{v}_{2}-\bar{v}_{1}$,
which requires $x_{\underline{\sigma}_{2}}=\bar{v}_{1}$. In $\mathcal{G}(F_{1},\delta_{\underline{v}_{2}})$,
if the highest bid is $\bar{v}_{1}$, type $\bar{v}_{1}$ earns zero
expected payoff. This contradicts Lemma \ref{lem:highest-type-earns-positively},
which states that the highest type always earns positive rent in a
non-degenerate all-pay auction. Thus, peace is not securable.

\subsection{Proof of Lemma \ref{slope-lower-than-1} \protect}\label{proof-slope-lower-than-1}

For any given $b$, denote the rejection set by $\tilde{V}_{2}$.
From Lemma \ref{general-all-pay}, for any type $v_{1}\in[\underline{v}_{1},\bar{v}_{1}]$,
the expected payoff in any given BNE $\sigma$ of any continuation
auction $\mathcal{G}(\tilde{F}_{1},\tilde{F}_{2})$ is $U_{1}(v_{1}|b,\tilde{F}_{1})=\max_{\beta\in\mathbb{R}_{+}}\{H_{2,\sigma}(\beta)v_{1}-\beta\}.$
Since $U_{1}(v_{1}|b,\tilde{F}_{1})$ is the maximum of a family of
affine functions, $U_{1}(v_{1}|b,\tilde{F}_{1})$ is convex and thus
absolutely continuous and differentiable almost everywhere. The expected
payoff of any type $v_{1}\in[\underline{v}_{1},\bar{v}_{1}]$ from
the deviation is 
\[
\begin{aligned}\pi_{1}(v_{1}|b,\tilde{F}_{1})= & \left(1-P(v_{2}\in\tilde{V}_{2})\right)(v_{1}-b)+P\left(v_{2}\in\tilde{V}_{2}\right)\max_{\beta\in\mathbb{R}_{+}}\{H_{2,\sigma}(\beta)v_{1}-\beta\},\end{aligned}
\]
which is also absolutely continuous and differentiable almost everywhere.

Since $H_{2,\sigma}\le1$, whenever $\pi_{1}(v_{1}|b,\tilde{F}_{1})$
is differentiable, the slope is not higher than one. At those non-differentiable
points, the left and right derivatives of $\pi_{1}(v_{1}|b,\tilde{F}_{1})$
are not higher than one. Thus, $\pi_{1}(v_{1}|b,\tilde{F}_{1})$ increases
continuously at rates no greater than one.

\subsection{Proof of Lemma \ref{lem:zero-bribe-min-reply} }\label{Proof-of-zero-bribe}

For the zero bribe, let $A$ denote the set of accepting types and
$R$ the set of rejecting types. For type $\underline{v}_{1}$, the
expected payoff is $c_{2,\sigma}\underline{v}_{1}$ in a BNE $\sigma$
of the continuation auction by Remark \ref{rem:A-useful-interpretation}.
So his total expected payoff in the grand game from the zero off-path
bribe under any consistent reply $\varrho$ is: 
\[
\pi_{1}(\underline{v}_{1}\mid b=0,\varrho)=\left(F_{2}(A)+F_{2}(R)c_{2,\sigma}\right)\underline{v}_{1}.
\]
 Define the \emph{total inactive probability} $P_{0}(\varrho):=F_{2}(A)+F_{2}(R)\,c_{2,\sigma}$.
We show that full rejection minimizes $P_{0}(\varrho)$ among all
consistent replies $\varrho$. 

For the zero bribe, full rejection is always a consistent reply. Let
$\sigma^{\mathrm{full}}$ be a BNE of the continuation auction under
full rejection $\varrho_{\mathrm{full}}$. If $c_{1,\sigma^{\mathrm{full}}}>0$,
then $c_{2,\sigma^{\mathrm{full}}}=0$ and type $\underline{v}_{1}$'s
expected payoff in $\sigma^{\mathrm{full}}$ is zero, with which the
current lemma is automatically true. Hence, below we focus on the
case with $c_{1,\sigma^{\mathrm{full}}}=0$.

In addition to full rejection, there could be consistent replies with
partial rejection. In any BNE $\sigma^{\mathrm{part}}$ of the continuation
auction induced by partial rejection, $c_{1,\sigma^{\mathrm{part}}}=0$.\footnote{Partial rejection means an acceptance set $A$ with $F_{2}(A)>0$.
By consistency, any type in $A$ must weakly prefer accepting $b=0$
to entering the continuation auction. If $c_{1,\sigma}>0$, then player
1 bids zero with positive probability, so every type $v_{2}>\underline{v}_{2}$
earns at least an expected payoff $c_{1,\sigma}v_{2}>0$---strictly
better than accepting $b=0$.}

To save notation below, we let $c_{2}^{\mathrm{part}}:=c_{2,\sigma^{\mathrm{part}}}$
and $c_{2}^{\mathrm{full}}:=c_{2,\sigma^{\mathrm{full}}}$.

Now consider a consistent partial-rejection reply $\varrho_{\mathrm{part}}$.
Let $p:=F_{2}(A)\in(0,1)$. Let $F_{2,\mathrm{part}}$ be the distribution
function of $v_{2}$ conditional on the rejection set $R$. Let $F_{2,\mathrm{part}}^{-1}$
denote the inverse function of $F_{2,\mathrm{part}}$. 

From above, $c_{1,\sigma^{\mathrm{part}}}=0$. In the nontrivial case
of full rejection, $c_{1,\sigma^{\mathrm{full}}}=0$. Since we are
comparing two continuation auctions with the same $\tilde{F}_{1}$,
Corollary \ref{cor:key-identity} gives 
\begin{equation}
\int_{c_{2}^{\mathrm{part}}}^{1}\frac{1}{F_{2,\mathrm{part}}^{-1}(s)}ds=\int_{c_{2}^{\mathrm{full}}}^{1}\frac{1}{F_{2}^{-1}(t)}dt.\label{eq:zero-bribe-min-key}
\end{equation}
For a rejection set with $F_{2}$-measure $1-p$, consider the corresponding
upper tail of $F_{2}$ with same measure $1-p$.\footnote{We do not claim there must exist a consistent reply with the upper
tail set being the set of rejecting types. We only use its statistical
property.} Let $F_{\mathrm{ut}}(v_{2}):=\left(F_{2}(v_{2})-p\right)/\left(1-p\right)$
be the distribution of the upper tail. Then the inverse $F_{\mathrm{ut}}^{-1}(s)=F_{2}^{-1}\bigl(p+(1-p)s\bigr)$.
Clearly, $F_{\mathrm{ut}}$ first-order stochastically dominates the
conditional distribution of any rejection set with $F_{2}$-measure
$1-p$. Thus, for all $s\in[0,1]$, 
\[
F_{2,\mathrm{part}}^{-1}(s)\le F_{2}^{-1}\bigl(p+(1-p)s\bigr),
\]
or equivalently
\[
\frac{1}{F_{2,\mathrm{part}}^{-1}(s)}\ge\frac{1}{F_{2}^{-1}\bigl(p+(1-p)s\bigr)}.
\]
Integrating both sides over the same interval $[c_{2}^{\mathrm{part}},1]$,
we have 
\begin{equation}
\int_{c_{2}^{\mathrm{part}}}^{1}\frac{1}{F_{2,\mathrm{part}}^{-1}(s)}ds\ge\int_{c_{2}^{\mathrm{part}}}^{1}\frac{1}{F_{2}^{-1}\bigl(p+(1-p)s\bigr)}ds=\frac{1}{1-p}\int_{p+(1-p)c_{2}^{\mathrm{part}}}^{1}\frac{1}{F_{2}^{-1}(t)}dt.\label{eq:zero-bribe-inter}
\end{equation}
By equation \eqref{eq:zero-bribe-min-key}, the LHS of \eqref{eq:zero-bribe-inter}
equals $\int_{c_{2}^{\mathrm{full}}}^{1}\frac{1}{F_{2}^{-1}(t)}dt$.
Since $P_{0}(\varrho_{\mathrm{part}})=p+(1-p)c_{2}^{\mathrm{part}}$,
the LHS and the RHS of \eqref{eq:zero-bribe-inter} can be rewritten
as
\[
(1-p)\int_{c_{2}^{\mathrm{full}}}^{1}\frac{1}{F_{2}^{-1}(t)}dt\ge\int_{P_{0}(\varrho_{\mathrm{part}})}^{1}\frac{1}{F_{2}^{-1}(t)}dt.
\]
Since $1-p<1$, it follows that 
\[
\int_{c_{2}^{\mathrm{full}}}^{1}\frac{1}{F_{2}^{-1}(t)}dt>(1-p)\int_{c_{2}^{\mathrm{full}}}^{1}\frac{1}{F_{2}^{-1}(t)}dt\ge\int_{P_{0}(\varrho_{\mathrm{part}})}^{1}\frac{1}{F_{2}^{-1}(t)}dt.
\]
Since the integrand is positive on the support, the integral is strictly
decreasing in its lower limit, and we have 
\[
P_{0}(\varrho_{\mathrm{part}})>c_{2}^{\mathrm{full}}=P_{0}(\varrho_{\mathrm{full}}).
\]
Therefore every consistent reply with partial rejection gives type
$\underline{v}_{1}$ a strictly higher payoff than full rejection.

\subsection{Proof of Lemma \ref{lem:zero-bribe-worst-belief} }\label{Proof-of-worst-belief-to-zero-bribe}

Let $\tilde{F}_{1}$ be any belief over player 1's type with support
contained in $[\underline{v}_{1},\bar{v}_{1}]$. Let $\sigma(\tilde{F}_{1})$
denote a BNE of the full-rejection continuation auction $\mathcal{G}(\tilde{F}_{1},F_{2})$,
and let $c_{2}(\tilde{F}_{1}):=c_{2,\sigma(\tilde{F}_{1})}$ be player
2's bid mass at zero. 

We show that for every such belief, 
\[
c_{2}(\tilde{F}_{1})\ge c_{2}(\delta_{\underline{v}_{1}}).
\]

Fix a belief $\tilde{F}_{1}$ with support in $[\underline{v}_{1},\bar{v}_{1}]$
and let $\sigma(\tilde{F}_{1})$ be a BNE of $\mathcal{G}(\tilde{F}_{1},F_{2})$.
Write $c_{1}:=c_{1,\sigma(\tilde{F}_{1})}$ and $c_{2}:=c_{2,\sigma(\tilde{F}_{1})}$.
By Lemma \ref{lem:key-identity}, 
\[
\int_{c_{2}}^{1}\frac{1}{F_{2}^{-1}(s)}ds=\int_{c_{1}}^{1}\frac{1}{\tilde{F}_{1}^{-1}(s)}ds,
\]
and by Lemma \ref{general-all-pay}, $c_{1}c_{2}=0$. We distinguish
two cases. 

\noindent\textbf{Case 1: $c_{2}>0$.} Then $c_{1}=0$ and the equation
above reads 
\[
\int_{c_{2}}^{1}\frac{1}{F_{2}^{-1}(s)}ds=\int_{0}^{1}\frac{1}{\tilde{F}_{1}^{-1}(s)}ds.
\]
Because $\tilde{F}_{1}^{-1}(s)\ge\underline{v}_{1}$ for all $s\in(0,1)$,
\[
\int_{0}^{1}\frac{1}{\tilde{F}_{1}^{-1}(s)}ds\le\int_{0}^{1}\frac{1}{\underline{v}_{1}}ds=\frac{1}{\underline{v}_{1}},
\]
with equality if and only if $\tilde{F}_{1}=\delta_{\underline{v}_{1}}$.
Hence 
\[
\int_{c_{2}}^{1}\frac{1}{F_{2}^{-1}(s)}ds\le\frac{1}{\underline{v}_{1}}.
\]
For the degenerate belief $\delta_{\underline{v}_{1}}$, let $c_{1}^{\star}$
and $c_{2}^{\star}$ denote the corresponding masses at zero. If $c_{2}^{\star}>0$,
then $c_{1}^{\star}=0$ and 
\[
\int_{c_{2}^{\star}}^{1}\frac{1}{F_{2}^{-1}(s)}ds=\int_{0}^{1}\frac{1}{\underline{v}_{1}}ds=\frac{1}{\underline{v}_{1}}.
\]
Combining the two equations above, we have $c_{2}\ge c_{2}^{\star}$.

\noindent\textbf{Case 2: $c_{2}=0$.} Then type $\underline{v}_{1}$'s
expected payoff under full rejection is $U(\underline{v}_{1}|\sigma)=c_{2,\sigma}\underline{v}_{1}=0$
as in Remark \ref{rem:boundary-cond-one-sided} (i), which trivially
satisfies the lemma.

In all cases, $c_{2}(\tilde{F}_{1})\ge c_{2}(\delta_{\underline{v}_{1}})$,
as claimed. This completes the proof.

\subsection{Proof of Lemma \ref{lem:a2-lowest-highest} \protect}\label{Proof-of-Lemma-a2-lowest-highest}

We first consider non-empty rejection sets.

From Lemma \ref{general-all-pay}, with $c_{2,\sigma}=0$, the boundary
condition $H_{2,\sigma}(x_{\sigma})=1$ becomes
\[
1=\int_{0}^{x_{\sigma}}\frac{1}{\tilde{F}_{1}^{-1}(H_{1,\sigma}(y))}\,dy.
\]
With belief $\delta_{\underline{v}_{1}}$, $x_{\underline{\sigma}}=\underline{v}_{1}.$
For any other belief $\tilde{F}_{1}$, since $\tilde{F}_{1}^{-1}(s)\geq\underline{v}_{1}$
for all $s$, we have 
\[
1=\int_{0}^{x_{\sigma}}\frac{1}{\tilde{F}_{1}^{-1}(H_{1,\sigma}(y))}dy\leq\int_{0}^{x_{\sigma}}\frac{1}{\underline{v}_{1}}dy=\frac{x_{\sigma}}{\underline{v}_{1}}.
\]
Therefore $x_{\sigma}\geq\underline{v}_{1}=x_{\underline{\sigma}}$. 

Suppose, for contradiction, $a_{2,\underline{\sigma}}>a_{2,\sigma}$.

Given an off-path bribe $b$, consider $\underline{\sigma}$ and some
$\sigma$. The indifference condition for the lowest rejecting type
implies $a_{2,\underline{\sigma}}c_{1,\underline{\sigma}}=b$ and
$a_{2,\sigma}c_{1,\sigma}\ge b$ (with equality if interior). Hence,
$a_{2,\underline{\sigma}}>a_{2,\sigma}$ implies 
\begin{equation}
c_{1,\underline{\sigma}}<c_{1,\sigma}.\label{rej-indifference}
\end{equation}
We now derive the reverse inequality using the key identity in Lemma
\ref{lem:key-identity}.

Since both rejection sets are non-empty, Lemma \ref{lem:c2sig-zero-p1-form}
gives $c_{2,\underline{\sigma}}=c_{2,\sigma}=0$. Let $\tilde{F}_{2}^{\underline{\sigma}}$
and $\tilde{F}_{2}^{\sigma}$ denote the distributions of $v_{2}$
conditional on the respective rejection sets $[a_{2,\underline{\sigma}},\bar{v}_{2}]$
and $[a_{2,\sigma},\bar{v}_{2}]$, respectively. Since $a_{2,\underline{\sigma}}>a_{2,\sigma}$,
$\tilde{F}_{2}^{\underline{\sigma}}$ first-order stochastically dominates
$\tilde{F}_{2}^{\sigma}$. By Lemma \ref{lem:key-identity} (with
$c_{2,\cdot}=0$): 
\[
\frac{1-c_{1,\underline{\sigma}}}{\underline{v}_{1}}=\int_{c_{1,\underline{\sigma}}}^{1}\frac{1}{\underline{v}_{1}}ds=\int_{0}^{1}\frac{1}{(\tilde{F}_{2}^{\underline{\sigma}})^{-1}(s)}ds\le\int_{0}^{1}\frac{1}{(\tilde{F}_{2}^{\sigma})^{-1}(s)}ds=\int_{c_{1,\sigma}}^{1}\frac{1}{\tilde{F}_{1}^{-1}(s)}ds.
\]
Since $\tilde{F}_{1}^{-1}(s)\geq\underline{v}_{1}$ for all $s$,
for the RHS of the above inequality we further have 
\[
\int_{c_{1,\sigma}}^{1}\frac{1}{\tilde{F}_{1}^{-1}(s)}ds\leq\int_{c_{1,\sigma}}^{1}\frac{1}{\underline{v}_{1}}ds=\frac{1-c_{1,\sigma}}{\underline{v}_{1}}.
\]
Combining these inequalities yields $1-c_{1,\underline{\sigma}}\le1-c_{1,\sigma}$,
i.e., $c_{1,\underline{\sigma}}\ge c_{1,\sigma}$, contradicting (\ref{rej-indifference})
and thus the supposition. Thus $a_{2,\underline{\sigma}}\le a_{2,\sigma}$
when the rejection sets are not empty. 

Finally, if the rejection set is empty under $\delta_{\underline{v}_{1}}$
(i.e., $a_{2,\underline{\sigma}}=\bar{v}_{2}$), it must be empty
under any other belief. Observe that if some $\sigma$ (induced by
some other belief than $\delta_{\underline{v}_{1}}$) had a non-empty
rejection set, type $\bar{v}_{2}$ would earn $U_{2}(\bar{v}_{2}|\sigma)>b$.
Since $x_{\underline{\sigma}}\le x_{\sigma}$, type $\bar{v}_{2}$
could bid $x_{\underline{\sigma}}$ in $\underline{\sigma}$ to earn
strictly more than $b$, implying rejection is strictly profitable
under $\delta_{\underline{v}_{1}}$ as well and thus the rejection
set cannot be empty for belief $\delta_{\underline{v}_{1}}$. That
is, $a_{2,\underline{\sigma}}\le a_{2,\sigma}$ even in this case.

\subsection{Proof of Lemma \ref{lem:a2lowBar-exis-uniq} \protect}\label{proof-a2-exis-uniq}

For type $a_{2,\underline{\sigma}}$ to be indifferent between accepting
and rejecting the bribe, $U_{2}(a_{2,\underline{\sigma}}|b,\delta_{\underline{v}_{1}})=c_{1,\underline{\sigma}}a_{2,\underline{\sigma}}=b$.
Combining this with (\ref{boundary-degen-new2}), the indifference
condition can be written as 
\begin{equation}
b=a_{2,\underline{\sigma}}\left(1-\underline{v}_{1}\mathcal{I}_{2}(a_{2,\underline{\sigma}})\right).\label{a2-indifference-new}
\end{equation}

Let $g(a)=a(1-\underline{v}_{1}\mathcal{I}_{2}(a))$. We examine the
monotonicity of $g(a)$ for $a\in[\underline{v}_{2},\bar{v}_{2}]$.
As $a$ increases, the term $(1-\underline{v}_{1}\mathcal{I}_{2}(a))$
is strictly increasing because $\mathcal{I}_{2}(a)$ is strictly decreasing
in $a$. Since $a$ is also strictly increasing in $a$ and $(1-\underline{v}_{1}\mathcal{I}_{2}(a))\ge c_{1,\underline{\sigma}}$
is nonnegative for $a\in[\underline{v}_{2},\bar{v}_{2}]$, $g(a)$
is strictly increasing in $a$.

The range of $g(a)$ on $[\underline{v}_{2},\bar{v}_{2}]$ is $[g(\underline{v}_{2}),g(\bar{v}_{2})]$,
where 
\[
g(\underline{v}_{2})=\underline{v}_{2}(1-\underline{v}_{1}\mathcal{I}_{2}(\underline{v}_{2}))\quad\text{and}\quad g(\bar{v}_{2})=\bar{v}_{2}(1-\underline{v}_{1}/\bar{v}_{2})=\bar{v}_{2}-\underline{v}_{1}.
\]

Since $g(a)$ is strictly increasing, for any $b\in[g(\underline{v}_{2}),\bar{v}_{2}-\underline{v}_{1}]$,
there exists a unique $a_{2,\underline{\sigma}}\in(\underline{v}_{2},\bar{v}_{2})$
such that $g(a_{2,\underline{\sigma}})=b$. This also shows that $a_{2,\underline{\sigma}}$
is strictly increasing in $b$ in this range. It is also clear that
$a_{2,\underline{\sigma}}$ is continuous in $b$. 

Next we consider the cases of full rejection and empty rejection set.

Observe that $U_{2}(a_{2,\underline{\sigma}}|b,\delta_{\underline{v}_{1}})=c_{1,\underline{\sigma}}a_{2,\underline{\sigma}}$.
So if $b<g(\underline{v}_{2})$, then the consistent reply is full
rejection, i.e., $a_{2,\underline{\sigma}}=\underline{v}_{2}$.

Finally, from above, a non-empty rejection set requires $b\le\bar{v}_{2}-\underline{v}_{1}$.
Thus, if $b>\bar{v}_{2}-\underline{v}_{1}$, the rejection set is
empty.

Since $\mathcal{I}_{2}(\underline{v}_{2})=\int_{0}^{1}(F_{2}^{-1}(s))^{-1}ds$
and $b_{\min}:=\underline{v}_{2}\left(1-\underline{v}_{1}\mathcal{I}_{2}(\underline{v}_{2})\right)=g(\underline{v}_{2})$,
this completes the proof.

\subsection{Proof of Lemma \ref{lem:zero-bribe-payoff-is-not-max} }\label{Proof-of-Lemma-zero-bribe-payoff-is-not-sup}

From Lemma \ref{lem:a2lowBar-exis-uniq}, if $b>\bar{v}_{2}-\underline{v}_{1}$,
the consistent reply is full acceptance; if $b=\bar{v}_{2}-\underline{v}_{1}$,
only type $\bar{v}_{2}$ would possibly reject $b$. So for $b\ge\bar{v}_{2}-\underline{v}_{1}$,
the maximum payoff of type $\underline{v}_{1}$ is $\underline{v}_{1}-(\bar{v}_{2}-\underline{v}_{1})$.
 If $b\le\bar{v}_{2}-\underline{v}_{1}$, then a consistent reply
with a non-empty rejection set $[a_{2,\underline{\sigma}}(b),\bar{v}_{2}]$
exists and is unique. Furthermore, $a_{2,\underline{\sigma}}(b)$
is strictly increasing for $b>b_{\min}$ and equals $\underline{v}_{2}$
for $b\le b_{\min}$. So the expected payoff of type $\underline{v}_{1}$,
$\pi_{1}(\underline{v}_{1}|b,\delta_{\underline{v}_{1}})=F_{2}(a_{2,\underline{\sigma}}(b))(\underline{v}_{1}-b)$
(by Lemma \ref{lem:a2-lowest-highest}), is continuous  everywhere.
When $b$ approaches zero, the limit of the rejection set is full
rejection and $\lim_{b\downarrow0}\pi_{1}(\underline{v}_{1}|b>0,\delta_{\underline{v}_{1}})=0$.
Therefore, we can conclude that for $b>0$, there is a maximum. 

Let $\sigma^{*}$ be the unique BNE of the continuation auction $\mathcal{G}(\delta_{\underline{v}_{1}},F_{2})$
following full rejection of the off-path zero bribe. Since $c_{1,\sigma^{*}}c_{2,\sigma^{*}}=0$,
it follows from Lemma \ref{lem:one-sided-complete-info} that 
\[
\pi_{1}(\underline{v}_{1}|b=0,\mathcal{G}(\delta_{\underline{v}_{1}},F_{2}))=U_{1}(\underline{v}_{1}\mid\sigma^{*})=c_{2,\sigma^{*}}\cdot\underline{v}_{1}.
\]
From Remark \ref{rem:boundary-cond-one-sided}, this implies 
\[
c_{1,\sigma^{*}}=1-\underline{v}_{1}\int_{c_{2,\sigma^{*}}}^{1}\frac{1}{F_{2}^{-1}(s)}ds.
\]

If $c_{1,\sigma^{*}}>0$, then $c_{2,\sigma^{*}}=0$, which implies
$U_{1}(\underline{v}_{1}\mid\sigma^{*})=0$. In this case, the inequality
\eqref{eq:zero-bribe-payoff-no-greater-positive-bribe} holds trivially.

Now, suppose $c_{1,\sigma^{*}}=0$. The boundary condition is
\begin{equation}
\int_{c_{2,\sigma^{*}}}^{1}\frac{1}{F_{2}^{-1}(s)}ds=\frac{1}{\underline{v}_{1}}.\label{eq:zero-bribe-c1}
\end{equation}

Observe that $\mathcal{I}_{2}(x)=\int_{0}^{1}1/\Phi_{2}^{-1}(s|x)ds$
is continuous and strictly decreasing on $[\underline{v}_{2},\bar{v}_{2}]$,
with 
\begin{equation}
\mathcal{I}_{2}(\bar{v}_{2})=\frac{1}{\bar{v}_{2}}<\frac{1}{\underline{v}_{1}}.\label{eq:I-2-min}
\end{equation}

Let $\underline{\mathcal{I}}_{2}=\lim_{x\downarrow\underline{v}_{2}}\mathcal{I}_{2}(x)$. 

If $\underline{\mathcal{I}}_{2}>1/\underline{v}_{1}$, then by continuity
and the fact that $\mathcal{I}_{2}(\bar{v}_{2})<1/\underline{v}_{1}$,
there exists a unique $a^{*}\in(\underline{v}_{2},\bar{v}_{2})$ such
that $\mathcal{I}_{2}(a^{*})=1/\underline{v}_{1}$. Observe that $\mathcal{I}_{2}(x)$
can be rewritten as 
\[
\mathcal{I}_{2}(x)=\frac{1}{1-F_{2}(x)}\int_{F_{2}(x)}^{1}\frac{1}{F_{2}^{-1}(s)}ds.
\]
So the condition $\mathcal{I}_{2}(a^{*})=1/\underline{v}_{1}$ is
equivalent to 
\[
\int_{F_{2}(a^{*})}^{1}\frac{1}{F_{2}^{-1}(s)}ds=\frac{1-F_{2}(a^{*})}{\underline{v}_{1}}.
\]
Comparing this with \eqref{eq:zero-bribe-c1}, we conclude $F_{2}(a^{*})>c_{2,\sigma^{*}}$.

For any $a_{2}\in(a^{*},\bar{v}_{2}]$, the bribe $b(a_{2})=a_{2}(1-\underline{v}_{1}\mathcal{I}_{2}(a_{2}))$
is positive and induces a valid consistent reply. By Lemma \ref{lem:c2sig-zero-p1-form},
$\pi_{1}(\underline{v}_{1}\mid b(a_{2}))=F_{2}(a_{2})\bigl(\underline{v}_{1}-b(a_{2})\bigr)$.
As $a_{2}\downarrow a^{*}$, we have $b(a_{2})\to0$ and $F_{2}(a_{2})\to F_{2}(a^{*})$,
which implies 
\[
\lim_{a_{2}\downarrow a^{*}}\pi_{1}(\underline{v}_{1}\mid b(a_{2}))=F_{2}(a^{*})\underline{v}_{1}>c_{2,\sigma^{*}}\underline{v}_{1}.
\]
This establishes that $\max_{b>0}\pi_{1}(\underline{v}_{1}\mid b,\delta_{\underline{v}_{1}})>U_{1}(\underline{v}_{1}\mid\sigma^{*})$.

Finally, suppose $\underline{\mathcal{I}}_{2}\leq1/\underline{v}_{1}$.
Then $\mathcal{I}_{2}(x)\leq1/\underline{v}_{1}$ for all $x$. Comparing
the definition of $\mathcal{I}_{2}(x)$ with \eqref{eq:zero-bribe-c1},
we have
\[
\lim_{x\downarrow\underline{v}_{2}}\mathcal{I}_{2}(x)=\int_{0}^{1}\frac{1}{F_{2}^{-1}(s)}ds\le\int_{c_{2,\sigma^{*}}}^{1}\frac{1}{F_{2}^{-1}(s)}ds=\frac{1}{\underline{v}_{1}}.
\]
This implies $c_{2,\sigma^{*}}=0$. So again the inequality \eqref{eq:zero-bribe-payoff-no-greater-positive-bribe}
holds trivially.

In all cases, the inequality holds and the proof is complete.

\subsection{Proof of Theorem \ref{thm:Peace-implementability} \protect}\label{Proof-of-Theorem-implementability}

Peace is implementable if and only if there exists a bribe $\bar{b}$
such that no type of player 1 deviates and player 2 accepts under
some belief system. By Lemma \ref{slope-lower-than-1}, player 1's
deviation payoff $\pi_{1}(v_{1}|b,\tilde{F}_{1})$ is non-decreasing
in $v_{1}$ with slope at most 1. Since the equilibrium payoff $v_{1}-\bar{b}$
has slope 1, it suffices to check type $\underline{v}_{1}$. Thus,
player 1 does not deviate if and only if: 
\[
\max_{b}\min_{\tilde{F}_{1}}\pi_{1}(\underline{v}_{1}|b,\tilde{F}_{1})\le\underline{v}_{1}-\bar{b}.
\]
By Lemmas \ref{lem:zero-bribe-worst-belief} and \ref{lem:a2-lowest-highest},
the worst-case belief for player 1 is $\delta_{\underline{v}_{1}}$
and we can focus on $\max_{b}\pi_{1}(\underline{v}_{1}|b,\delta_{\underline{v}_{1}})$.
Player 2 accepts if and only if $\bar{b}\ge U_{2}(\bar{v}_{2}|\bar{\sigma}_{2})=\bar{v}_{2}c_{1,\bar{\sigma}_{2}}$.
Combining these, peace is implementable if and only if: 
\begin{equation}
\bar{v}_{2}c_{1,\bar{\sigma}_{2}}+\max_{b}\pi_{1}(\underline{v}_{1}|b,\delta_{\underline{v}_{1}})\le\underline{v}_{1}.\label{eq:impl-cond-combined}
\end{equation}

Case 1: $\bar{v}_{2}\le\underline{v}_{1}$. Any $b>0$ is accepted
with probability 1 (Lemma \ref{lem:a2lowBar-exis-uniq}). For $b=0$,
full acceptance yields $\pi_{1}(\underline{v}_{1}|0,\delta_{\underline{v}_{1}})=\underline{v}_{1}$.
Thus, $\max_{b}\pi_{1}=\underline{v}_{1}$. Equation (\ref{eq:impl-cond-combined})
becomes $\bar{v}_{2}c_{1,\bar{\sigma}_{2}}+\underline{v}_{1}\le\underline{v}_{1}$,
implying $c_{1,\bar{\sigma}_{2}}=0$.

Case 2: $\bar{v}_{2}>\underline{v}_{1}$. By Lemma \ref{lem:zero-bribe-payoff-is-not-max},
the maximum deviation payoff occurs at some $b>0$. If $b\ge\bar{v}_{2}-\underline{v}_{1}$,
the bribe is fully accepted, and $\pi_{1}(\underline{v}_{1}|b,\delta_{\underline{v}_{1}})=\underline{v}_{1}-b$.
The maximum in this range is at $b=\bar{v}_{2}-\underline{v}_{1}$,
yielding $2\underline{v}_{1}-\bar{v}_{2}$. If $0<b<\bar{v}_{2}-\underline{v}_{1}$,
the rejection set is non-empty. By Lemma \ref{lem:a2lowBar-exis-uniq},
$a_{2,\underline{\sigma}}(b)$ is unique and strictly increasing for
$b\ge b_{\min}$. The payoff is $\pi_{1}(\underline{v}_{1}|b,\delta_{\underline{v}_{1}})=F_{2}(a_{2,\underline{\sigma}}(b))(\underline{v}_{1}-b)$.
Using the inverse relationship between $b$ and $a_{2}$ from (\ref{interpret-as-bidding-1st-price}),
the maximization problem becomes: 
\[
\tilde{\pi}_{1}^{*}(\underline{v}_{1}):=\max_{a_{2}\in[\underline{v}_{2},\bar{v}_{2}]}F_{2}(a_{2})\left(\underline{v}_{1}-a_{2}(1-\underline{v}_{1}\mathcal{I}_{2}(a_{2}))\right).
\]
Substituting this into (\ref{eq:impl-cond-combined}) yields condition
(\ref{imple-thm}).

Robustness: By Lemma \ref{slope-lower-than-1}, if type $v_{1}$ profits
from deviation, so does $\underline{v}_{1}$. Thus, the belief $\delta_{\underline{v}_{1}}$
survives the D1 criterion. Similarly, since player 2's rejection payoff
is increasing in type, the belief $\delta_{\bar{v}_{2}}$ survives
D1 upon rejection. Hence, condition (\ref{imple-thm}) characterizes
robust equilibria.

\section{Proofs for the requesting model}

For any off-path request $r$ and any payoff-relevant consistent reply,
the deviation payoff of type $v_{1}$ is increasing in $v_{1}$, while
the equilibrium payoff in a peaceful requesting equilibrium is the
constant $\bar{r}$ independent of $v_{1}$. Hence D1 selects $\bar{v}_{1}$.
We summarize this auxiliary observation below for the proofs in this
section. 
\begin{lem}
\label{lem:D1-selects-v1bar}In any robust peaceful equilibrium of
the requesting model, the D1 criterion selects the posterior belief
$\delta_{\bar{v}_{1}}$ after every off-path request. 
\end{lem}

\subsection{Proof of Lemma \ref{lem-request-non-exist-bound} \protect}\label{proof-request-non-exist-bound}

Consider first the case $\bar{v}_{1}>2\underline{v}_{2}$. In any
peaceful equilibrium, $\bar{r}\le\underline{v}_{2}$ and player 1's
payoff is $\bar{r}$. Suppose type $\bar{v}_{1}$ deviates to an off-path
request $r\in(\underline{v}_{2},\bar{v}_{1}/2)$. Upon receiving the
off-path request $r$, no rejecting type $v_{2}$ would bid more than
$r$ in the continuation auction. So for any belief $\tilde{F}_{1}$
and any $\sigma$, $x_{\sigma}\le r$. So $\bar{v}_{1}-x_{\sigma}\ge\bar{v}_{1}-r>\bar{v}_{1}-\bar{v}_{1}/2=\bar{v}_{1}/2>\underline{v}_{2}\ge\bar{r}$.
Since $r>\bar{r}$, for any belief $\tilde{F}_{1}$ and any $\sigma$,
type $\bar{v}_{1}$'s expected payoff $\pi_{1}(\bar{v}_{1}|r,\tilde{F}_{1})>\bar{r}$
by (\ref{eq:pi-v1bar-r}). Therefore, such an off-path $r\in(\underline{v}_{2},\bar{v}_{1}/2)$
is always a profitable deviation for type $\bar{v}_{1}$. So in this
case, there is no peaceful equilibrium.

So there exists a peaceful equilibrium only if $\bar{v}_{1}\le2\underline{v}_{2}$.
By similar arguments to those above, it follows that $\bar{r}\ge\bar{v}_{1}/2$
in any peaceful equilibrium. To see this, suppose to the contrary,
$\bar{r}<\bar{v}_{1}/2$. Consider type $\bar{v}_{1}$'s deviation
to $r=\bar{v}_{1}/2$. Again, in the continuation auction, $x_{\sigma}\le r=\bar{v}_{1}/2$
for any beliefs. So $\pi_{1}(\bar{v}_{1}|r=\bar{v}_{1}/2,\tilde{F}_{1})\ge\bar{v}_{1}/2>\bar{r}$
and thus $r=\bar{v}_{1}/2$ is a profitable off-path deviation for
type $\bar{v}_{1}$ for any beliefs.

\subsection{Proof of Lemma \ref{lem:request-no-equil-v1bar-too-low} \protect}\label{proof-request-no-equil-v1bar-too-low}

By Lemma \ref{lem:D1-selects-v1bar}, in any robust peaceful equilibrium
D1 selects the belief $\delta_{\bar{v}_{1}}$ after every off-path
request. We therefore focus on this belief below.

We first show that any possible peaceful request cannot equal or exceed
$\bar{v}_{1}$. To see this, suppose to the contrary that such an
equilibrium exists and thus in the equilibrium player 2's payoff is
no higher than $v_{2}-\bar{v}_{1}$. Consider player 2's deviation,
namely rejection of $\bar{r}$. For any given $\tilde{F}_{2}$, in
any BNE $\sigma_{2}$ of $\mathcal{G}(F_{1},\tilde{F}_{2})$, because
$\bar{v}_{1}>\underline{v}_{1}$, we must have that $x_{\sigma_{2}}<\bar{v}_{1}$
by Lemma \ref{lem:highest-type-earns-positively}. It follows then
that any type $v_{2}\ge\bar{v}_{1}$ at least can bid $x_{\sigma_{2}}$
to secure a payoff strictly higher than $v_{2}-\bar{v}_{1}$ (the
equilibrium expected payoff would only be higher). So rejection of
such a request is a profitable deviation and thus a contradiction.
Thus below we consider $\bar{r}<\bar{v}_{1}$.

We next show that if $\bar{v}_{1}\le\underline{v}_{2}$ and $\bar{r}<\bar{v}_{1}$,
then type $\bar{v}_{1}$ can always deviate to some $r\in(\bar{r},\bar{v}_{1})$
which is accepted by all types of player 2 and thus a profitable deviation.
To see this, suppose $\bar{r}<\bar{v}_{1}$. Upon receiving an off-path
request $r\in(\bar{r},\bar{v}_{1})$, with belief $\delta_{\bar{v}_{1}}$
the rejection set of player 2 is $[\underline{v}_{2},\alpha_{2,\bar{\sigma}}(r)]$
for some $\alpha_{2,\bar{\sigma}}(r)>\underline{v}_{2}$. In any BNE
$\bar{\sigma}$ of $\mathcal{G}(\delta_{\bar{v}_{1}},\Psi_{2}(v_{2}|\alpha_{2,\bar{\sigma}}(r)))$,
$x_{\bar{\sigma}}=\bar{v}_{1}$ by Lemma \ref{lem:one-sided-complete-info}
because $\bar{v}_{1}\le\underline{v}_{2}$ implies $c_{2,\bar{\sigma}}=0$.\footnote{If $c_{2,\bar{\sigma}}>0$, then it means that there is a positive
measure of type $v_{2}$ earning zero payoff by bidding zero in $\bar{\sigma}$;
but any type $v_{2}>\bar{v}_{1}$ can secure a positive payoff by
bidding $\bar{v}_{1}$.} But then this violates the off-path consistency requirement on player
2 because the payoff of type $\alpha_{2,\bar{\sigma}}(r)$ is $\alpha_{2,\bar{\sigma}}(r)-x_{\bar{\sigma}}=\alpha_{2,\bar{\sigma}}(r)-\bar{v}_{1}$
and thus the fact that $r<\bar{v}_{1}$ implies that it is better
for type $\alpha_{2,\bar{\sigma}}(r)$ to accept $r$. Therefore,
the consistent reply of player 2 is to accept $r$ for all types $v_{2}$.
But full acceptance then implies that this off-path $r\in(\bar{r},\bar{v}_{1})$
is a profitable deviation for type $\bar{v}_{1}$. This completes
the proof.

\subsection{Proof of Theorem \ref{thm-requesting} \protect }\label{sec:Proof-request-lmm-bidder-1}

By Lemma \ref{lem:D1-selects-v1bar}, in any robust peaceful equilibrium
D1 selects the belief $\delta_{\bar{v}_{1}}$ after every off-path
request. We therefore focus on this belief below.

We first show that the equilibrium request must be $\bar{r}=\underline{v}_{2}$.
Suppose $\bar{r}<\underline{v}_{2}$. For any off-path $r\in(\bar{r},\underline{v}_{2})$,
player 2's consistent reply is full acceptance. If the rejection set
were non-empty, type $\underline{v}_{2}$ (the lowest type) would
be in it. In the continuation auction $\mathcal{G}(\delta_{\bar{v}_{1}},\Psi_{2}(v_{2}|\alpha_{2,\bar{\sigma}}(r)))$,
if $c_{1,\sigma}>0$, then $x_{\sigma}=\bar{v}_{1}>r$, so type $\alpha_{2,\bar{\sigma}}(r)$
prefers acceptance. If $c_{1,\sigma}=0$, type $\underline{v}_{2}$
earns 0, which is less than $r$. Thus, full acceptance is the only
consistent reply. Since full acceptance yields $\bar{v}_{1}>\bar{r}$,
deviating to such $r$ is profitable for player 1. Hence, $\bar{r}$
cannot be less than $\underline{v}_{2}$. Since $\bar{r}\le\underline{v}_{2}$
by definition, we have $\bar{r}=\underline{v}_{2}$.

With $\bar{r}=\underline{v}_{2}$, deviations to $r<\underline{v}_{2}$
are not profitable (full acceptance by the arguments above). We focus
on $r>\underline{v}_{2}$ and clearly for these off-path requests
the rejection set is non-empty, i.e., $\alpha_{2,\bar{\sigma}}(r)>\underline{v}_{2}$.

Recall that $x_{\bar{\sigma}^{*}}$ is the highest bid in any BNE
$\bar{\sigma}^{*}$ of $\mathcal{G}(\delta_{\bar{v}_{1}},F_{2})$.
So when $\bar{v}_{1}>\underline{v}_{2}$, it must be that $x_{\bar{\sigma}^{*}}\ge\underline{v}_{2}$.
This is because, if $x_{\bar{\sigma}^{*}}<\underline{v}_{2}$, then
both type $\bar{v}_{1}$ and type $\underline{v}_{2}$ win with positive
expected payoffs (and thus probabilities) and thus $c_{1,\bar{\sigma}^{*}},c_{2,\bar{\sigma}^{*}}>0$,
which is impossible.

Consider first an $r>x_{\bar{\sigma}^{*}}$. We show that for such
an $r$ the consistent reply of player 2 is full rejection and thus
it is not profitable for type $\bar{v}_{1}$ to deviate to such an
$r$. To see this, suppose that the rejection set is $[\underline{v}_{2},\alpha_{2,\bar{\sigma}}(r)]$
and $\alpha_{2,\bar{\sigma}}(r)<\bar{v}_{2}$. Then since $\alpha_{2,\bar{\sigma}}(r)$
is in the interior of the support of $F_{2}$, the indifference condition
implies that in any BNE $\bar{\sigma}$ of the continuation auction
$\mathcal{G}(\delta_{\bar{v}_{1}},\Psi_{2}(v_{2}|\alpha_{2,\bar{\sigma}}(r)))$,
$x_{\bar{\sigma}}=r>x_{\bar{\sigma}^{*}}$. From Lemma \ref{lem:one-sided-complete-info},
$x_{\bar{\sigma}}=\bar{v}_{1}(1-c_{2,\bar{\sigma}})$. Because $x_{\bar{\sigma}^{*}}=\bar{v}_{1}(1-c_{2,\bar{\sigma}^{*}})$,
we have $c_{2,\bar{\sigma}^{*}}>c_{2,\bar{\sigma}}\ge0$. Because
$c_{2,\bar{\sigma}^{*}}>0$ and by definition of $c_{2,\bar{\sigma}^{*}}$
we have $\bar{v}_{1}\int_{c_{2,\bar{\sigma}^{*}}}^{1}\left(F_{2}^{-1}(s)\right)^{-1}ds=1$,
it follows that 
\[
\bar{v}_{1}\int_{c_{2,\bar{\sigma}}}^{1}\left(F_{2}^{-1}(s)\right)^{-1}ds>1.
\]
Moreover, since $\Psi_{2}^{-1}(s|\alpha_{2,\bar{\sigma}}(r))=F_{2}^{-1}(sF_{2}(\alpha_{2,\bar{\sigma}}(r)))\le F_{2}^{-1}(s)$
for every $s\in[0,1]$, we have\\ $1/\Psi_{2}^{-1}(s|\alpha_{2,\bar{\sigma}}(r))\ge1/F_{2}^{-1}(s)$
and hence 
\[
\bar{v}_{1}\int_{c_{2,\bar{\sigma}}}^{1}\left(\Psi_{2}^{-1}(s|\alpha_{2,\bar{\sigma}}(r))\right)^{-1}ds\ge\bar{v}_{1}\int_{c_{2,\bar{\sigma}}}^{1}\left(F_{2}^{-1}(s)\right)^{-1}ds>1.
\]
Applying the boundary condition (\ref{eq:boundary-cond-one-sided})
to $\mathcal{G}(\delta_{\bar{v}_{1}},\Psi_{2}(\cdot|\alpha_{2,\bar{\sigma}}(r)))$
then gives 
\[
c_{1,\bar{\sigma}}=1-\bar{v}_{1}\int_{c_{2,\bar{\sigma}}}^{1}\left(\Psi_{2}^{-1}(s|\alpha_{2,\bar{\sigma}}(r))\right)^{-1}ds<0,
\]
which is impossible. Therefore, for any $r>x_{\bar{\sigma}^{*}}$,
the consistent reply of player 2 is full rejection, i.e., $\alpha_{2,\bar{\sigma}}(r)=\bar{v}_{2}$.
From above, in any BNE $\bar{\sigma}^{*}$ of $\mathcal{G}(\delta_{\bar{v}_{1}},F_{2})$,
$x_{\bar{\sigma}^{*}}\ge\underline{v}_{2}$. Because $\bar{v}_{1}\le2\underline{v}_{2}$,
type $\bar{v}_{1}$'s payoff is $\bar{v}_{1}-x_{\bar{\sigma}^{*}}\le\underline{v}_{2}$.
So with full rejection, it is not profitable for type $\bar{v}_{1}$
to deviate to such an $r$.

Next consider an $r\in(\underline{v}_{2},x_{\bar{\sigma}^{*}})$.
In this case full rejection is inconsistent because type $\bar{v}_{2}$
earns $\bar{v}_{2}-x_{\bar{\sigma}^{*}}<\bar{v}_{2}-r<r$. Thus, there
is a unique partial rejection threshold $\alpha_{2,\bar{\sigma}}(r)<\bar{v}_{2}$
given by (\ref{eq:request-a2sig-below-1}).\footnote{Since $\bar{v}_{1}\ge x_{\bar{\sigma}^{*}}>r=x_{\bar{\sigma}}$ and
thus type $\bar{v}_{1}$ earns a positive expected payoff, $c_{2,\bar{\sigma}}>0$
and $c_{1,\bar{\sigma}}=0$. Applying $c_{1,\bar{\sigma}}=0$ to (\ref{eq:boundary-cond-one-sided})
for $\mathcal{G}(\delta_{\bar{v}_{1}},\Psi_{2}(v_{2}|\alpha_{2,\bar{\sigma}}(r)))$,
it is clear that $\alpha_{2,\bar{\sigma}}(r)$ is given by (\ref{eq:request-a2sig-below-1}).} With $x_{\sigma}=r$, player 1's payoff is $F_{2}(\alpha_{2,\bar{\sigma}}(r))(\bar{v}_{1}-r)+(1-F_{2}(\alpha_{2,\bar{\sigma}}(r)))r$.

We can then formulate the maximization problem in (\ref{eq:p1-v1bar-r-delta-v1bar})
with the compact choice set including $\underline{v}_{2}$ and $x_{\bar{\sigma}^{*}}$.
So in any robust peaceful equilibrium, it is not profitable for any
type $v_{1}$ to deviate to any off-path request if and only if 
\[
\underline{v}_{2}\ge F_{2}(\alpha_{2,\bar{\sigma}}(r^{*}))(\bar{v}_{1}-r^{*})+(1-F_{2}(\alpha_{2,\bar{\sigma}}(r^{*})))r^{*}.
\]

Now we consider player 2's rejection of the on-path request. In any
robust peaceful equilibrium with $\bar{r}=\underline{v}_{2}$, if
the equilibrium request $\bar{r}$ is rejected, the only reasonable
belief about $v_{2}$ is that $v_{2}=\underline{v}_{2}$ because for
any belief, the expected payoff of player 2 is non-decreasing in $v_{2}$
with derivative no greater than one wherever it exists. Thus it is
not profitable for any type $v_{2}$ to reject $\bar{r}$ if and only
if $U_{2}(\underline{v}_{2}|\underline{\sigma}_{2})=0$, namely 
\[
\underline{v}_{2}c_{1,\underline{\sigma}_{2}}=0.
\]
Since $\underline{v}_{2}$ is required to be positive, $c_{1,\underline{\sigma}_{2}}=0$
is required.

Aggregating all the results above, the proof is complete.

\subsection{Proof of Lemma \ref{lem:alpha2-sigma-ranking} \protect }\label{Proof-alpha2-sigma-ranking}

We first show that $\alpha_{2,\underline{\sigma}}\ge\alpha_{2,\sigma}$.
To show this, suppose to the contrary, $\alpha_{2,\underline{\sigma}}<\alpha_{2,\sigma}$.

Given an off-path request $r$, consider the BNE $\underline{\sigma}$
of $\mathcal{G}(\delta_{\underline{v}_{1}},\Psi_{2}(v_{2}|\alpha_{2,\underline{\sigma}}))$
induced by belief $\delta_{\underline{v}_{1}}$. Also consider some
BNE $\sigma$ of $\mathcal{G}(\tilde{F}_{1},\Psi_{2}(v_{2}|\alpha_{2,\sigma}))$
induced by some belief $\tilde{F}_{1}$. If $\alpha_{2,\sigma}<\bar{v}_{2}$,
then $x_{\sigma}=x_{\underline{\sigma}}=r$. If $\alpha_{2,\sigma}=\bar{v}_{2}$,
then $x_{\sigma}\le x_{\underline{\sigma}}=r$. Hence, $\alpha_{2,\underline{\sigma}}<\alpha_{2,\sigma}$
implies 
\[
x_{\sigma}\le x_{\underline{\sigma}}.
\]

For convenience, we abuse notation a little by denoting the type distribution
functions $\tilde{F}_{i}$ and bid distribution functions $\tilde{H_{i}}$
in the BNE $\underline{\sigma}$ by $F_{i,\underline{\sigma}}$ and
$H_{i,\underline{\sigma}}$, while in the BNE $\sigma$ by $F_{i,\sigma}$
and $H_{i,\sigma}$ for any generic $\tilde{F}_{1}\ne\delta_{\underline{v}_{1}}$.

From Lemma \ref{general-all-pay}, in $\underline{\sigma}$, 
\begin{gather*}
H_{1,\underline{\sigma}}'(\beta)=\frac{1}{F_{2,\underline{\sigma}}^{-1}(H_{2,\underline{\sigma}}(\beta))},\;H_{2,\underline{\sigma}}'(\beta)=\frac{1}{F_{1,\underline{\sigma}}^{-1}(H_{1,\underline{\sigma}}(\beta))}.
\end{gather*}
Similarly, in $\sigma$, 
\begin{gather*}
H_{1,\sigma}'(\beta)=\frac{1}{F_{2,\sigma}^{-1}(H_{2,\sigma}(\beta))},\;H_{2,\sigma}'(\beta)=\frac{1}{F_{1,\sigma}^{-1}(H_{1,\sigma}(\beta))}.
\end{gather*}

Observe that $F_{1,\underline{\sigma}}^{-1}(H)=\underline{v}_{1}\le F_{1,\sigma}^{-1}\left(\tilde{H}\right)$
for any $H$ and $\tilde{H}$, and $\underline{v}_{1}<F_{1,\sigma}^{-1}\left(\tilde{H}\right)$
for a positive mass of $\tilde{H}$, which implies $H_{2,\underline{\sigma}}'(\beta)\ge H_{2,\sigma}'(\beta)$
for each $\beta$ and $H_{2,\underline{\sigma}}'(\beta)>H_{2,\sigma}'(\beta)$
for a positive mass of $\beta$. Because $x_{\sigma}\le x_{\underline{\sigma}}$,
\[
H_{2,\underline{\sigma}}(\beta)=1-\int_{\beta}^{x_{\underline{\sigma}}}H_{2,\underline{\sigma}}'(x)dx\le1-\int_{\beta}^{x_{\sigma}}H_{2,\sigma}'(x)dx=H_{2,\sigma}(\beta)
\]
for any $\beta$, and in particular, 
\begin{equation}
c_{2,\underline{\sigma}}=H_{2,\underline{\sigma}}(0)<H_{2,\sigma}(0)=c_{2,\sigma}.\label{eq:c2-equal-requesting}
\end{equation}
It then follows that $c_{2,\sigma}>0$. 

It also follows from $\alpha_{2,\underline{\sigma}}<\alpha_{2,\sigma}$
that $F_{2,\underline{\sigma}}(v_{2})>F_{2,\sigma}(v_{2})$ for $v_{2}\notin\{\underline{v}_{2},\alpha_{2,\sigma}\}$,
and thus $F_{2,\underline{\sigma}}^{-1}(H)<F_{2,\sigma}^{-1}(H)$
for each $H\ne0$.\footnote{$F_{2,\sigma}(v_{2})$ is the conditional distribution $F_{2}(v_{2}|v_{2}\le\alpha_{2,\sigma})$.
So stochastic dominance is implied.} Because $F_{2,\underline{\sigma}}^{-1}(H)$ and $F_{2,\sigma}^{-1}(H)$
are both increasing and $H_{2,\underline{\sigma}}(\beta)\le H_{2,\sigma}(\beta)$
for any $\beta$, $F_{2,\underline{\sigma}}^{-1}(H_{2,\underline{\sigma}}(\beta))<F_{2,\sigma}^{-1}(H_{2,\sigma}(\beta))$.
So $H_{1,\underline{\sigma}}'(\beta)>H_{1,\sigma}'(\beta)$ for any
$\beta>0$. Since $x_{\sigma}\le x_{\underline{\sigma}}$, we have
\[
c_{1,\underline{\sigma}}=1-\int_{0}^{x_{\underline{\sigma}}}H_{1,\underline{\sigma}}'(\beta)d\beta<1-\int_{0}^{x_{\sigma}}H_{1,\sigma}'(\beta)d\beta=c_{1,\sigma}.
\]
It then follows that $c_{1,\sigma}>0$.

Therefore, the fact that $c_{1,\sigma},c_{2,\sigma}>0$ implies $c_{1,\sigma}c_{2,\sigma}>0$,
a contradiction to the equilibrium requirement $c_{1,\sigma}c_{2,\sigma}=0$.
The supposition is false and we have $\alpha_{2,\underline{\sigma}}\ge\alpha_{2,\sigma}$,
which also implies $x_{\underline{\sigma}}\le x_{\sigma}$.

The proof for $\alpha_{2,\sigma}\ge\alpha_{2,\bar{\sigma}}$ and $x_{\sigma}\le x_{\bar{\sigma}}$
can be done in the same spirit and thus is omitted.

\subsection{Proof of Theorem \ref{thm:request-security} \protect}\label{subsec:Proof-of-request-security}

We assume throughout that a robust peaceful equilibrium exists, so
$\bar{r}=\underline{v}_{2}$ and $\bar{v}_{1}\le2\underline{v}_{2}$.
Security must rule out two kinds of profitable deviations: (a) player~2's
rejection of the on-path request $\bar{r}$, and (b) player~1's deviation
to an off-path request $r\ne\bar{r}$. The proof is organized in three
steps.

\emph{Step 1: Player 2's rejection of the on-path request.} The expected
payoff from rejecting $\bar{r}$ is increasing in $v_{2}$ with slope
$\le1$, while the acceptance payoff increases at rate 1. Thus, it
suffices to check type $\underline{v}_{2}$. Under the belief maximizing
player 2's continuation payoff ($\tilde{F}_{2}=\delta_{\bar{v}_{2}}$),
$U_{2}(\underline{v}_{2}\mid\bar{\sigma}_{2})=\underline{v}_{2}c_{1,\bar{\sigma}_{2}}$.
Security requires this to be $\le\underline{v}_{2}$, which simplifies
to $c_{1,\bar{\sigma}_{2}}=0$.

\emph{Step 2: Player 1's deviation to off-path requests when}\textbf{
$\underline{v}_{1}>\underline{v}_{2}$.} We show that in this case,
no profitable deviation exists for type $\bar{v}_{1}$ (and hence
for any type) beyond the condition $c_{1,\bar{\sigma}_{2}}=0$ already
established.

\noindent\emph{Case 2a: $r<\underline{v}_{2}$.} We claim that for
any belief $\tilde{F}_{1}$, the consistent reply to $r$ is full
acceptance. By Remark~\ref{rem:request-full-rej-acc}(iii), it suffices
to verify this for belief $\delta_{\underline{v}_{1}}$. Suppose,
to the contrary, that rejection occurs with positive probability under
some BNE $\underline{\sigma}$ of the induced continuation auction
$\mathcal{G}(\delta_{\underline{v}_{1}},\Psi_{2}(v_{2}\mid\alpha_{2,\underline{\sigma}}(r)))$.
If $c_{1,\underline{\sigma}}>0$, then type $\underline{v}_{1}$ earns
zero payoff and $x_{\underline{\sigma}}=\underline{v}_{1}$; but then
type $\alpha_{2,\underline{\sigma}}(r)$ would strictly prefer accepting
$r$ because $r<\underline{v}_{2}<\underline{v}_{1}=x_{\underline{\sigma}}$.
If $c_{1,\underline{\sigma}}=0$, then type $\underline{v}_{2}$ earns
zero from rejection but strictly positive payoff $\underline{v}_{2}-r>0$
from acceptance, a contradiction. Thus the reply is full acceptance
for belief $\delta_{\underline{v}_{1}}$ and hence for all beliefs.
No profitable deviation exists in this case.

\emph{Case 2b: $r>\underline{v}_{2}$.} Such a request must lead to
a positive probability of rejection. We distinguish full and partial
rejection.

If the consistent reply is \emph{full rejection} for some belief,
then by Remark~\ref{rem:request-full-rej-acc}(ii) it is also full
rejection for belief $\delta_{\underline{v}_{1}}$. By Lemma~\ref{lem:alpha2-sigma-ranking},
$x_{\underline{\sigma}}\le x_{\sigma}$ for any belief, so type $\bar{v}_{1}$'s
payoff $\bar{v}_{1}-x_{\sigma}$ is maximized under belief $\delta_{\underline{v}_{1}}$.
Let $x_{\underline{\sigma}^{*}}$ be the highest bid in $\mathcal{G}(\delta_{\underline{v}_{1}},F_{2})$.
Since $\underline{v}_{1}>\underline{v}_{2}$, we must have $x_{\underline{\sigma}^{*}}\ge\underline{v}_{2}$;
otherwise both types $\underline{v}_{1}$ and $\underline{v}_{2}$
would earn positive payoffs, violating $c_{1,\underline{\sigma}^{*}}c_{2,\underline{\sigma}^{*}}=0$.
Because $\bar{v}_{1}\le2\underline{v}_{2}$, type $\bar{v}_{1}$'s
payoff satisfies $\bar{v}_{1}-x_{\underline{\sigma}^{*}}\le\underline{v}_{2}=\bar{r}$.
The deviation is not profitable in this case.

If the consistent reply is \emph{partial rejection}, then type $\bar{v}_{1}$'s
expected payoff is 
\[
F_{2}(\alpha_{2,\sigma}(r))(\bar{v}_{1}-r)+(1-F_{2}(\alpha_{2,\sigma}(r)))r,
\]
a convex combination of $\bar{v}_{1}-r$ and $r$. Since $r>\underline{v}_{2}\ge\bar{v}_{1}/2$,
we have $\bar{v}_{1}-r<r$. So, given any $r$, the payoff is decreasing
in $\alpha_{2,\sigma}(r)$. By Lemma~\ref{lem:alpha2-sigma-ranking},
the maximum is attained at $\alpha_{2,\bar{\sigma}}(r)$ under belief
$\delta_{\bar{v}_{1}}$. Under D1, this is the selected belief after
an off-path request, and robustness of the peaceful equilibrium already
rules out a profitable deviation for type $\bar{v}_{1}$ under that
belief.

\emph{Step 3: Player 1's deviation to off-path requests when}\textbf{
$\underline{v}_{1}\le\underline{v}_{2}$.} We first establish the
necessity of $\bar{v}_{1}-\underline{v}_{1}\le\underline{v}_{2}$,
then show that together with $c_{1,\bar{\sigma}_{2}}=0$ it is sufficient.

\noindent\emph{Necessity.} Player 1 can deviate to a sufficiently
high request (e.g., $r>\bar{v}_{1}$), which triggers full rejection
under any belief. Under the most favorable belief for player~1, namely
$\delta_{\underline{v}_{1}}$, the continuation auction is $\mathcal{G}(\delta_{\underline{v}_{1}},F_{2})$.
Its highest bid equals $\underline{v}_{1}$ (if $x_{\underline{\sigma}}<\underline{v}_{1}\le\underline{v}_{2}$,
then both player 1's and player 2's lowest types earn positive payoffs,
violating $c_{1,\underline{\sigma}}c_{2,\underline{\sigma}}=0$).
Type $\bar{v}_{1}$ then earns $\bar{v}_{1}-\underline{v}_{1}$, and
security requires this not to exceed the equilibrium payoff $\underline{v}_{2}$.

\emph{Sufficiency.} We verify that $\bar{v}_{1}-\underline{v}_{1}\le\underline{v}_{2}$
rules out all profitable deviations for type $\bar{v}_{1}$. It is
useful to note the following characterization of player~2's reply
under belief $\delta_{\underline{v}_{1}}$: 
\begin{itemize}
\item If $r<\underline{v}_{1}$, full acceptance by the same arguments in
Case 2a. 
\item If $r>\underline{v}_{1}$, full rejection. Indeed, any type $v_{2}$
can bid $\underline{v}_{1}$ and earn $v_{2}-\underline{v}_{1}>v_{2}-r$. 
\item If $r=\underline{v}_{1}$, player~2 is indifferent between acceptance
and rejection. 
\end{itemize}
By Remark~\ref{rem:request-full-rej-acc}(iii), full acceptance under
belief $\delta_{\underline{v}_{1}}$ implies full acceptance under
any belief. Hence for $r<\underline{v}_{1}$, no profitable deviation
exists.

We now consider $r\ge\underline{v}_{1}$.

\emph{Full rejection.} By Remark~\ref{rem:request-full-rej-acc}(ii),
if full rejection occurs for some belief, it also occurs for $\delta_{\underline{v}_{1}}$.
By Lemma~\ref{lem:alpha2-sigma-ranking}, $x_{\underline{\sigma}}\le x_{\sigma}$,
so type $\bar{v}_{1}$'s payoff $\bar{v}_{1}-x_{\sigma}$ is maximized
under belief $\delta_{\underline{v}_{1}}$, where the highest bid
is $\underline{v}_{1}$ and the payoff is $\bar{v}_{1}-\underline{v}_{1}\le\underline{v}_{2}$.
The deviation is not profitable in this case.

\emph{Partial rejection.} Type $\bar{v}_{1}$'s expected payoff is
\begin{equation}
\pi_{1}(\bar{v}_{1}\mid r)=F_{2}(\alpha_{2,\sigma}(r))(\bar{v}_{1}-r)+(1-F_{2}(\alpha_{2,\sigma}(r)))r,\label{eq:partial-payoff}
\end{equation}
a convex combination of $\bar{v}_{1}-r$ and $r$. We distinguish
two sub-cases.

If $r>\underline{v}_{2}$, then $\bar{v}_{1}-r<\underline{v}_{2}<r$,
so the payoff in \eqref{eq:partial-payoff} is decreasing in $\alpha_{2,\sigma}(r)$.
The maximum is attained at $\alpha_{2,\bar{\sigma}}(r)$ under belief
$\delta_{\bar{v}_{1}}$ (Lemma~\ref{lem:alpha2-sigma-ranking}).
Robustness rules out a profitable deviation for type $\bar{v}_{1}$
under that belief. 

If $r\in[\underline{v}_{1},\underline{v}_{2}]$, then both endpoints
of the convex combination are bounded by $\underline{v}_{2}$. To
see this, note first that $r\le\underline{v}_{2}$ trivially. Second,
since $r\ge\underline{v}_{1}$, we have $\bar{v}_{1}-r\le\bar{v}_{1}-\underline{v}_{1}\le\underline{v}_{2}$.
Hence the payoff in \eqref{eq:partial-payoff} is at most $\underline{v}_{2}$
for any belief. The deviation is not profitable in this case. 

\emph{Conclusion.} We have shown that security requires $c_{1,\bar{\sigma}_{2}}=0$
(Step~1) and, when $\underline{v}_{1}\le\underline{v}_{2}$, additionally
$\bar{v}_{1}-\underline{v}_{1}\le\underline{v}_{2}$ (Step~3). When
$\underline{v}_{1}>\underline{v}_{2}$, the condition $\bar{v}_{1}-\underline{v}_{1}\le\underline{v}_{2}$
is automatically satisfied because $\bar{v}_{1}\le2\underline{v}_{2}$
(required by robustness) and $\underline{v}_{1}>\underline{v}_{2}$.
Thus the two conditions in \eqref{eq:request-security-player-2} are
necessary and sufficient.

\bibliographystyle{teURL-fixed}
\bibliography{bribing}

\clearpage

\begin{center}
{\Large\textbf{Supplemental Appendix}}{\Large\par}
\par\end{center}

 \setcounter{page}{1}
\setcounter{footnote}{0}
\setcounter{lem}{0}
\setcounter{section}{0}
\setcounter{thm}{0}

\section{Comparative statics for peace implementability in the bribing model}

\citet{Zheng2019} shows that in his mediation model, if one or both
players become stronger in the sense that their type distributions
become more FOSD and the support remains unchanged, then peace implementability
is preserved. The preservation result also holds in our model if player
1's type distribution becomes more FOSD. To see this, observe that
the first term in (\ref{imple-thm}) necessarily decreases if $F_{1}$
becomes more FOSD. On the other hand, the second term in (\ref{imple-thm})
is the expected payoff of type $\underline{v}_{1}$, which depends
only on the type distribution through the value of $\underline{v}_{1}$
and $F_{2}$.

The prospect for peace implementability becomes more subtle if $F_{2}$
becomes more FOSD (even if the support remains unchanged). Intuitively,
one might expect that the expected payoff of type $\underline{v}_{1}$
in the FPA should become smaller due to a stronger opponent. However,
when player 2 becomes stronger, her ``bidding'' strategy (namely
$a_{2,\sigma}(b)$) also changes. In particular, $\mathcal{I}_{2}(a_{2})$
in \eqref{eq:pi1-tilde} may increase for some $a_{2}$, causing $\tilde{\pi}_{1}^{*}(\underline{v}_{1})$
to rise and thereby rendering peace unimplementable.  Hence, FOSD
may not be sufficient for preserving peace implementability. However,
we show below that if player 2 becomes stronger in terms of hazard-rate
dominance, then peace implementability can be preserved.
\begin{lem}
\label{lem:I-phi-Phi-ranking}Consider two distribution functions
$F$ and $G$ with densities $f$ and $g$, respectively.  Suppose
$G$ dominates $F$ in terms of the hazard rate, i.e., 
\[
\lambda_{G}(x)=\frac{g(x)}{1-G(x)}\le\frac{f(x)}{1-F(x)}=\lambda_{F}(x).
\]
Let 
\[
\Phi(v|x)=\frac{F(v)-F(x)}{1-F(x)},\quad\phi(v|x)=\frac{G(v)-G(x)}{1-G(x)}.
\]
Denote the inverse functions by $\Phi^{-1}(\cdot|x)$ and $\phi^{-1}(\cdot|x)$
respectively. Let
\[
\mathcal{I}_{\Phi}(x):=\int_{0}^{1}\frac{1}{\Phi^{-1}(s|x)}ds,\quad\mathcal{I}_{\phi}(x):=\int_{0}^{1}\frac{1}{\phi^{-1}(s|x)}ds.
\]
Then for all $x$,
\[
\mathcal{I}_{\phi}(x)\le\mathcal{I}_{\Phi}(x).
\]
\end{lem}
\begin{proof}
It's a standard result that $1-F(v)=\exp\left(-\int_{0}^{v}\lambda_{F}(z)dz\right)$.
The survival function of the conditional distribution $\Phi(\cdot|x)$
is: 
\[
1-\Phi(v|x)=\frac{1-F(v)}{1-F(x)}=\exp\left(-\int_{x}^{v}\lambda_{F}(z)dz\right).
\]
Similarly, for $G$: 
\[
1-\phi(v|x)=\exp\left(-\int_{x}^{v}\lambda_{G}(z)dz\right).
\]
Since $\lambda_{F}(z)\ge\lambda_{G}(z)$ for all $z$, we have $\int_{x}^{v}\lambda_{F}(z)dz\ge\int_{x}^{v}\lambda_{G}(z)dz$
for any $v\ge x$. This implies: 
\[
1-\Phi(v|x)\le1-\phi(v|x)\implies\Phi(v|x)\ge\phi(v|x).
\]
Thus, $\phi$ first-order stochastically dominates $\Phi$, which
means $\phi^{-1}(s|x)\ge\Phi^{-1}(s|x)$ for all $s\in(0,1)$. Consequently:
\[
\frac{1}{\phi^{-1}(s|x)}\le\frac{1}{\Phi^{-1}(s|x)}.
\]
Integrating both sides over $[0,1]$ yields $\mathcal{I}_{\phi}(x)\le\mathcal{I}_{\Phi}(x)$.
\end{proof}
The key to the result above is that hazard-rate dominance is precisely
the condition needed to ensure that FOSD is preserved under truncation.

By Lemma \ref{lem:I-phi-Phi-ranking}, if player 2's type distribution
becomes more hazard-rate dominant, then in \eqref{eq:pi1-tilde},
for each $a_{2}$, $\mathcal{I}_{2}(a_{2})$ decreases. Hazard-rate
dominance also implies FOSD. Thus $F_{2}(a_{2})$ becomes smaller
for each $a_{2}$ and the objective function in \eqref{eq:pi1-tilde}
becomes smaller. Therefore the maximum can only become smaller and
peace implementability is preserved. 

We summarize these observations in the following result. 
\begin{thm}
Suppose peace is implementable for some prior ($F_{1},F_{2}$). Then
peace implementability is preserved if $F_{1}$ becomes more first\nobreakdash-order
stochastically dominant on the same support, or $F_{2}$ becomes more
hazard\nobreakdash-rate dominant.

\end{thm}

\section{Non-peaceful equilibria of the bribing model}

In this section we consider non-peaceful equilibria. In any equilibrium,
if an on-path bribe is accepted with probability one, then any other
on-path bribe must be rejected with positive probability.\footnote{\label{footnote-prob-one}If there is a different bribe $b'$ offered
by a different type $v_{1}'$ and accepted with probability one, then
the lower bribe is preferred by both types.} Lemma \ref{lem:c2sig-zero-p1-form} also implies that for any on-path
positive bribe $b$, the rejection set can be described by an interval
$[a_{2,\sigma}(b),\bar{v}_{2}]$ where $\sigma$ is a BNE of the continuation
auction.

We first rule out decreasing equilibria and non-monotonic equilibria. 
\begin{lem}
\label{lem:bribe-non-decreasing}In any equilibrium, the on-path bribing
function is non-decreasing. 
\end{lem}
\begin{proof}
Consider two on-path bribes $b_{h}$ and $b_{l}<b_{h}$. Let the infimum
type who offers $b_{h}$ be denoted by $v_{h}=\inf\{v_{1}:b(v_{1})=b_{h}\}$
and the supremum type who offers $b_{l}$ be denoted by $v_{l}=\sup\{v_{1}:b(v_{1})=b_{l}\}$.
Correspondingly, for $i=h,l$, let $a_{2,\sigma_{i}}$ be the lowest
rejecting type and $\sigma_{i}$ be a BNE of the continuation auction
following rejection of $b_{i}$.

Observe that if $b_{h}$ is rejected, then in the BNE of the induced
continuation auction $c_{2,\sigma_{h}}=0$. This is because rejection
of a positive bribe implies that the lowest rejecting type must earn
a positive payoff and thus $c_{2,\sigma_{h}}$ cannot be positive
(if it were, then $c_{1,\sigma_{h}}=0$ and thus her payoff $c_{1,\sigma_{h}}a_{2,\sigma_{h}}$
would be zero). Incentive compatibility requires that $\pi_{1}(v_{h}|b_{h})\ge\pi_{1}(v_{h}|b_{l})$,
i.e., 
\[
F_{2}(a_{2,\sigma_{h}})(v_{h}-b_{h})\ge F_{2}(a_{2,\sigma_{l}})(v_{h}-b_{l})+(1-F_{2}(a_{2,\sigma_{l}}))(\max_{\beta}\;H_{2,\sigma_{l}}(\beta)v_{h}-\beta).
\]
This implies 
\[
F_{2}(a_{2,\sigma_{h}})(v_{h}-b_{h})\ge F_{2}(a_{2,\sigma_{l}})(v_{h}-b_{l}).
\]
Because $b_{h}>b_{l}$, we must have $F_{2}(a_{2,\sigma_{h}})>F_{2}(a_{2,\sigma_{l}})$
or equivalently $a_{2,\sigma_{h}}>a_{2,\sigma_{l}}$.

In the equilibrium, if an on-path bribe $b$ induces a continuation
auction with a BNE $\sigma$, the expected payoff of type $v_{1}$
offering it is 
\[
\pi(v_{1},b)=F_{2}(a_{2,\sigma}(b))(v_{1}-b)+\left(1-F_{2}(a_{2,\sigma}(b))\right)(\max_{\beta}\;H_{2,\sigma}(\beta)v_{1}-\beta).
\]
Since $a_{2,\sigma}(b)$ is strictly increasing, it's differentiable
almost everywhere. So by the envelope theorem
\[
\frac{\partial}{\partial v_{1}}\pi(v_{1},b)=F_{2}(a_{2,\sigma}(b))+\left(1-F_{2}(a_{2,\sigma}(b))\right)H_{2,\sigma}(\beta(v_{1}))
\]
and
\[
\frac{\partial^{2}}{\partial v_{1}\partial b}\pi(v_{1},b)=f_{2}(a_{2,\sigma}(b))a_{2,\sigma}'(b)\left(1-H_{2,\sigma}(\beta(v_{1}))\right)>0.
\]
Thus, $\pi(v_{1},b)$ has the single-crossing property. By standard
arguments, the equilibrium bribing function is non-decreasing.
\end{proof}
We next consider equilibria with continuously increasing segments. 
\begin{lem}
\label{lem:no-separating-equil}There is no equilibrium in which the
on-path bribing  function is absolutely continuous and strictly increasing
over some open type interval. 
\end{lem}
\begin{proof}
We first note that at most one on-path bribe can be accepted with
probability one. Thus, we consider only on-path bribes $b(v_{1})$
that are rejected with positive probability. Suppose, for contradiction,
that there exists an open interval of types $v_{1}$ over which the
bribing function $b(v_{1})$ is absolutely continuous and strictly
increasing. Since the interval is open, $b(v_{1})>0$ for all $v_{1}$
in the interval.

Consider the continuation auction following rejection of a bribe $b(v_{1})$
from the interval. Player 2's best response is described by an interval
$[a_{2}(v_{1}),\bar{v}_{2}]$. Since by consistency type $a_{2}(v_{1})$
earns a positive expected payoff in any BNE $\sigma$ of the auction
(equal to $c_{1,\sigma}a_{2}(v_{1})$), it follows that $c_{1,\sigma}>0$.
So type $v_{1}$ earns zero payoff in the auction and the upper bound
of the common bidding interval is $v_{1}$.

Consider another type $\hat{v}_{1}$ of player 1 that mimics type
$v_{1}$ by offering bribe $b(v_{1})$. By Lemma \ref{secret-bidding},
if $\hat{v}_{1}>v_{1}$, type $\hat{v}_{1}$'s optimal choice is to
bid $v_{1}$ to win for sure and the payoff is thus $\hat{v}_{1}-v_{1}$.
Similarly if $\hat{v}_{1}<v_{1}$, then type $\hat{v}_{1}$'s optimal
choice is to bid zero and the payoff is zero.

Now consider a bribe $b(v_{1})$ from type $v_{1}$ and another bribe
$b(\hat{v}_{1})$ from type $\hat{v}_{1}$ in the neighborhood of
$v_{1}$.

If type $\hat{v}_{1}>v_{1}$ mimics type $v_{1}$ by offering $b(v_{1})$,
the expected payoff is 
\[
\pi_{1}(\hat{v}_{1},v_{1})=F_{2}(a_{2}(v_{1}))(\hat{v}_{1}-b(v_{1}))+(1-F_{2}(a_{2}(v_{1})))(\hat{v}_{1}-v_{1}).
\]
The incentive compatibility condition requires,\footnote{The condition $\hat{v}_{1}\in\arg\max_{\hat{v}_{1}}\;\pi_{1}(\hat{v}_{1},v_{1})$
yields 
\[
f_{2}(a_{2}(\hat{v}_{1}))(\hat{v}_{1}-b(\hat{v}_{1}))a_{2}'(\hat{v}_{1})-F_{2}(a_{2}(\hat{v}_{1}))b'(\hat{v}_{1})-(1-F_{2}(a_{2}(\hat{v}_{1})))=0.
\]
Replacing $\hat{v}_{1}$ by $v_{1}$ we obtain (\ref{ic-mimic-low}).} 
\begin{equation}
f_{2}(a_{2}(v_{1}))(v_{1}-b(v_{1}))a_{2}'(v_{1})-F_{2}(a_{2}(v_{1}))b'(v_{1})-(1-F_{2}(a_{2}(v_{1})))=0.\label{ic-mimic-low}
\end{equation}

If type $v_{1}<\hat{v}_{1}$ mimics type $\hat{v}_{1}$ by offering
a separating bribe $b(\hat{v}_{1})$, the expected payoff is 
\[
\pi_{1}(v_{1},\hat{v}_{1})=F_{2}(a_{2}(\hat{v}_{1}))(v_{1}-b(\hat{v}_{1})).
\]
The incentive compatibility condition requires 
\begin{equation}
f_{2}(a_{2}(v_{1}))(v_{1}-b(v_{1}))a_{2}'(v_{1})-F_{2}(a_{2}(v_{1}))b'(v_{1})=0.\label{ic-mimic-high}
\end{equation}
The conditions (\ref{ic-mimic-low}) and (\ref{ic-mimic-high}) together
imply 
\[
-(1-F_{2}(a_{2}(v_{1})))=0,
\]
which can be true if and only if $a_{2}(v_{1})=\bar{v}_{2}$ for all
$v_{1}$, and thus $b'(v_{1})=0$, a contradiction to the assumption
of the separating segment.
\end{proof}
The result above excludes regular separating segments, a finding reminiscent
of \citet{rachmilevitch2013}. In particular, it rules out differentiable
fully separating equilibria. The next theorem focuses on a finite-message
class. We call a non-peaceful equilibrium \emph{regular} if its on-path
bribing function is a finite step function (and hence has a finite
range and finitely many pooling regions). This restriction is not
merely technical; rather, it isolates the economically transparent
case in which the proposer uses finitely many settlement offers, excluding
pathological monotone strategies with infinitely many jumps or singular
components. The theorem should therefore be read as a characterization
within this regular class, not as an unrestricted characterization
of all possible monotone strategies.

We next examine other regular non-peaceful equilibria. The following
result shows that any such equilibrium has at most two on-path bribes.

\begin{thm}
\label{thm:no-semi-pooling}Suppose there exists a regular non-peaceful
equilibrium. Then either the equilibrium is fully pooling, or there
are exactly two on-path bribes, $b_{l}<b_{h}$. In the latter case,
lower types offer $b_{l}$, and higher types offer $b_{h}$. Moreover,
$b_{l}$ is rejected with positive probability, and $b_{h}$ is accepted
with probability one. 
\end{thm}
\begin{proof}
We show the result by showing that there exist no equilibria in which
there are two consecutive pooling bribes rejected with positive probability.
By Lemma \ref{lem:bribe-non-decreasing}, the regular (finite step
function) on-path bribing strategy is non-decreasing. Furthermore,
at most one on-path bribe can be accepted with probability one; if
two distinct bribes were accepted with probability one, the lower
bribe would be strictly preferred by all types offering the higher
bribe. Thus, if there are $n\ge2$ on-path bribes $b_{1}<\dots<b_{n}$,
only $b_{n}$ can be accepted with probability one, implying $b_{1},\dots,b_{n-1}$
are rejected with positive probability.

In this proof, for $a<b$, we write $[a,b\rangle$ to mean either
$[a,b]$ or $[a,b)$ and $\langle a,b]$ to mean either $[a,b]$ or
$(a,b]$.

Suppose there exists an equilibrium in which any type $v_{1}\in[v_{l},\hat{v}_{1}\rangle$
offers a bribe $b_{l}$ and any type $v_{1}\in\langle\hat{v}_{1},v_{h}]$
offers a bribe $b_{h}$ for some $v_{l}<\hat{v}_{1}<v_{h}$. By Lemma
\ref{lem:bribe-non-decreasing}, $b_{h}>b_{l}\ge0$. Suppose further
both $b_{h}$ and $b_{l}$ are rejected with positive probability.
Let the rejection set of $b_{i}$ be $[a_{2,\sigma^{i}}(b_{i}),\bar{v}_{2}]$
for $i=h,l$. And denote a BNE of the continuation auction following
rejection of $b_{i}$ by $\sigma^{i}$ and the highest bid by $x_{\sigma^{i}}$.

Clearly type $\hat{v}_{1}$ is indifferent between $b_{h}$ and $b_{l}$.
In $\sigma^{h}$, type $\hat{v}_{1}$ bids zero and the expected payoff
is $c_{2,\sigma^{h}}\hat{v}_{1}$, whereas in $\sigma^{l}$ type $\hat{v}_{1}$
bids $x_{\sigma^{l}}$ and the payoff is $\hat{v}_{1}-x_{\sigma^{l}}$.
Because by consistency type $a_{2,\sigma^{h}}$ must earn a positive
expected payoff in any BNE $\sigma$ of the auction, it follows that
$c_{1,\sigma^{h}}>0$ and $c_{2,\sigma^{h}}=0$. So the indifference
condition for type $\hat{v}_{1}$ is 
\[
F_{2}(a_{2,\sigma^{h}})(\hat{v}_{1}-b_{h})=F_{2}(a_{2,\sigma^{l}})(\hat{v}_{1}-b_{l})+(1-F_{2}(a_{2,\sigma^{l}}))(\hat{v}_{1}-x_{\sigma^{l}}).
\]
The fact $c_{1,\sigma^{h}}>0$ means that a positive measure of types
$v_{1}$ in the right neighborhood of $\hat{v}_{1}$ earn zero in
$\sigma^{h}$. It follows that the highest bid in $\sigma^{h}$, $x_{\sigma^{h}}>\hat{v}_{1}$.
So 
\begin{align*}
F_{2}(a_{2,\sigma^{h}})(\hat{v}_{1}-b_{h})+(1-F_{2}(a_{2,\sigma^{h}}))(\hat{v}_{1}-x_{\sigma^{h}})< & F_{2}(a_{2,\sigma^{l}})(\hat{v}_{1}-b_{l})+(1-F_{2}(a_{2,\sigma^{l}}))(\hat{v}_{1}-x_{\sigma^{l}})
\end{align*}
which is equivalent to 
\begin{equation}
F_{2}(a_{2,\sigma^{h}})b_{h}+(1-F_{2}(a_{2,\sigma^{h}}))x_{\sigma^{h}}>F_{2}(a_{2,\sigma^{l}})b_{l}+(1-F_{2}(a_{2,\sigma^{l}}))x_{\sigma^{l}}.\label{eq:semi-pooling-ineq}
\end{equation}

In the equilibrium, the expected payoff of type $v_{h}$ is 
\begin{align*}
\pi_{1}(v_{h}) & =F_{2}(a_{2,\sigma^{h}})(v_{h}-b_{h})+(1-F_{2}(a_{2,\sigma^{h}}))(v_{h}-x_{\sigma^{h}})\\
 & =v_{h}-[F_{2}(a_{2,\sigma^{h}})b_{h}+(1-F_{2}(a_{2,\sigma^{h}}))x_{\sigma^{h}}].
\end{align*}
Similarly, by deviating to $b_{l}$, the expected payoff of type $v_{h}$
is 
\[
\pi_{1}(v_{h}|b_{l})=v_{h}-[F_{2}(a_{2,\sigma^{l}})b_{l}+(1-F_{2}(a_{2,\sigma^{l}}))x_{\sigma^{l}}].
\]
So it follows from (\ref{eq:semi-pooling-ineq}) that $\pi_{1}(v_{h}|b_{l})>\pi_{1}(v_{h})$
and thus $b_{l}$ is a profitable deviation for type $v_{h}$, a contradiction.
\end{proof}

\end{document}